\documentclass{scrartcl}
\KOMAoptions{paper=letter}

\RequirePackage{fix-cm}
\RequirePackage[utf8]{inputenc}
\RequirePackage[T1]{fontenc}
\RequirePackage{microtype} 
\RequirePackage{xspace}
\RequirePackage{amssymb}
\RequirePackage{amsmath}
\RequirePackage{amsthm}
\RequirePackage[hidelinks]{hyperref} 
\RequirePackage[round]{natbib}

\RequirePackage{calc}
\RequirePackage{nicefrac}
\RequirePackage{tikz}
\usetikzlibrary{positioning,arrows,shapes,decorations,calc,fit,arrows.meta,colorbrewer} 
\RequirePackage{colortbl}
\RequirePackage{booktabs}
\RequirePackage{cleveref}
\RequirePackage{array}
\RequirePackage{graphicx}
\RequirePackage{xcolor}
\RequirePackage{enumitem}
\RequirePackage{tabularx}

\AddToHook{cmd/appendix/before}{%
     \crefalias{section}{appendix}%
     \crefalias{subsection}{appendix}
}

\newcommand{\todo}[1]{} 

\date{} 

\newcommand{\pref}{\succ}

\theoremstyle{definition}

\theoremstyle{plain}
\newtheorem{theorem}{Theorem}
\newtheorem{lemma}{Lemma}

\newtheoremstyle{bfnote}
{}{}%
{}{}%
{\bfseries}{.}%
{ }%
{\thmname{#1}\thmnumber{ #2}\thmnote{\textnormal{ (#3)}}}
\theoremstyle{bfnote}

\newcommand{\ml}{\ensuremath{\mathit{ML}}\xspace}

\newcommand{\supp}{\mathrm{supp}} 

\newcommand{\Omit}[1]{} 

\usepackage{multirow}
\usepackage{arydshln}

\theoremstyle{remark} 
\newtheorem{step}{Step}
\crefname{step}{step}{steps}
\Crefname{step}{Step}{Steps}
\newtheorem{claim}{Claim}
\crefname{claim}{Claim}{Claims}
\Crefname{claim}{Claim}{Claims}

\newcommand{\uni}{\operatorname{uni}}

\newcommand{\prof}[1][]{\ifthenelse{\equal{#1}{}}{\mathcal{R}}{\mathcal{R}_{#1}}}
\newcommand{\fprof}[1][]{\ifthenelse{\equal{#1}{}}{\mathcal{R}^*}{\mathcal{R}^*_{#1}}}

\newcommand{\fone}{\mathcal{F}}
\renewcommand{\pref}{\succcurlyeq}

\renewcommand{\int}{\mathrm{int}}

\renewcommand{\epsilon}{\varepsilon}

\title{Consistent Probabilistic Social Choice Revisited}

\author{Florian Brandl\\University of Bonn%
\and 
Felix Brandt\\
Technical University of Munich%
}

\begin{document}

\maketitle

\begin{abstract}
\citet{Bran13a} characterized a probabilistic social choice function known as maximal lotteries within a framework based on fractional preference profiles, which abstracts away from individual voters.
While this modeling assumption enables a more elegant and transparent proof, it complicates comparison with other results in the literature.
The purpose of this note is to transfer their results to the standard model of social choice, where each preference profile is defined for a finite number of voters.
Along the way, we prove a slightly stronger version of their main theorem that uses a weaker continuity condition and allows for real-valued (rather than only rational-valued) probabilities.
\end{abstract}

\section{Introduction}

\citet{Bran13a} have shown that a probabilistic social choice function (PSCF) known as maximal lotteries is characterized by population-consistency and composition-consistency. The characterization was proved in a framework that operates on \emph{fractional profiles}, i.e., preference profiles are modeled as functions that associate each possible preference relation with the fraction of voters who share these preferences. While this framework is not without precedent \citep[see, e.g.,][]{Youn74b,Youn75a,YoLe78a,Saar95a,Myer95b,DaMa08a}, it differs from the standard model of social choice theory based on the preferences of individual voters. When PSCFs are defined for fractional profiles, they are \emph{anonymous} (i.e., all voters are treated identically) and \emph{homogeneous} (i.e., replicating the electorate does not affect the outcome). 
The characterization by \citet{Bran13a} specifically rests on the underlying framework of fractional profiles when defining \emph{decisiveness} (the set of profiles where the PSCF is resolute is dense in the set of all preference profiles) and \emph{continuity} (the PSCF is upper hemi-continuous).

The purpose of this note is to transfer the theorem by \citet{Bran13a} to the standard model of social choice, where each preference profile is defined for a finite number of voters. This facilitates comparison with other results in the literature. 
In addition to the main theorem by \citeauthor{Bran13a}, we also transfer their alternative characterization using Condorcet-consistency and an impossibility result for non-probabilistic pSCFs to the standard setting. These translations are achieved by proving a slightly stronger version of the theorem by \citeauthor{Bran13a} that weakens upper hemi-continuity to a weaker continuity condition. Furthermore, we allow for lotteries with real-valued (rather than only rational-valued) probabilities.

\section{Preliminaries}\label{sec:prelims}

Let $U$ be an infinite universal set of alternatives.
The set of \emph{agendas} from which alternatives are to be chosen is the set of finite and non-empty subsets of~$U$, denoted by~$\fone(U)$.
The set of all linear (i.e., complete, transitive, and antisymmetric) \emph{preference relations} over some set $A\in\fone(U)$ will be denoted by $\mathcal{L}(A)$.
The set of all preference profiles for an agenda $A$ is denoted by $\prof[A]=\mathbb N^{\mathcal L(A)}$ and $\mathcal{R} = \bigcup_{A\in\fone(U)} \prof[A]$. %
Hence, preference profiles are \emph{anonymous} in the sense that they only represent how many voters hold each preference relation.
For some finite set $X$, we denote by $\Delta(X)$ the set of all probability distributions over $X$.
We interpret $R({\pref})$ as the number of voters with preference relation ${\pref}\in\mathcal{L}(A)$. 
Preference profiles are depicted by tables in which each column represents a preference relation $\pref$ with $R({\pref})>0$.
For $B\subseteq A$ and $R\in\prof[A]$, $R|_B$ is the restriction of $R$ to alternatives in $B$, i.e., for all ${\pref}\in\mathcal{L}(B)$,

\[R|_B({\pref}) = \sum_{{\pref}'\in \mathcal{L}(A)\colon{\pref}\subseteq{\pref}'} R({\pref}')\text.\]
For all $x,y\in A$, $R(x,y) = R|_{\{x,y\}}({\{(x,y)\}})$ is the number of voters who prefer $x$ to $y$ (the set $\{(x,y)\}$ represents the preference relation on two alternatives with $x\succ y$). %
Elements of $\Delta(A)$ are called \emph{lotteries} and will be written as convex combinations of alternatives. 
If $p$ is a lottery, $p_x$ is the probability that $p$ assigns to alternative $x$. 

A \emph{probabilistic social choice function (pSCF)} $f$ is a function that, for any agenda $A\in\fone(U)$, maps a preference profile $R\in \prof[A]$ to a non-empty and convex subset of $\Delta(A)$, and satisfies the following four properties.
\newlength{\firstcol}
\settowidth{\firstcol}{\emph{Resolvability:}}

\medskip
\noindent
\begin{tabular}{@{}p{\firstcol}p{\linewidth-\firstcol}} %
\emph{Homogeneity:} & For all $R\in\prof$ and $k\in\mathbb N$,
$
    f(R) = f(kR)
$.\\
\emph{Faithfulness:} & For all $x,y\in U$, %
$
    R=\{(1,(x,y))\} \text{ implies } f(R) \neq \{y\}
$.\\
\emph{Continuity:} & For all $R\in\prof$, if there exist $R'\in\prof$ and a lottery $p$ such that\\&
$
    f(R' + kR) = \{p\} \text{ for all $k\in\mathbb N$, then } p \in f(R)
$.\\
\emph{Resolvability:} & For all $A\in\fone(U)$ and $R \in \prof[A]$, there is ${\pref} \in\mathcal L(A)$ such that
$
    \lvert f(R + {\pref})\rvert = 1
$.
\end{tabular}
\medskip
\todo{FB: ``inclusion-homogeneity'' already follows from population-consistency. It seems we only need the exclusion part. Perhaps worth mentioning.}

Let us discuss each of these assumptions individually.
Homogeneity is well-known in social choice theory \citep[see, e.g.,][]{Smit73a,Youn77a}.
It demands that replicating the entire electorate does not affect the outcome of the pSCF. In the context of fair division, this property is sometimes called replication invariance.

Faithfulness is an extremely weak condition concerning the special case of a single voter and two alternatives. The original version, due to \citet{Youn74a,Youn75a}, requires that if the voter prefers $x$ to $y$, $x$ should be selected. Our version is even weaker and only demands that $y$ should not be selected with probability 1. This is the only axiom we impose that interprets the preference relation.

Continuity goes back to \citet{Smit73a} and \citet{Youn75a}. \todo{FB: compare to Young's condition.}

Finally, resolvability is a condition concerned with the decisiveness of a pSCF. It demands that whenever a pSCF returns more than one lottery, a single voter can be added such that the pSCF only returns a single lottery for the resulting profile. It was first proposed by \citet{Tide87a} and picked up more recently by \citet{HoPa20a} and \citet{Holl24a}.

The key differences from the model by \citet{Bran13a} are replacing upper hemi-continuity with continuity, decisiveness with resolvability, and adding homogeneity.

Probabilistic social choice functions nest traditional (non-probabilistic) set-valued social choice functions.
If for each $A \in \fone(U)$ and $R\in\mathbb N^{\mathcal L(A)}$, $f(R) = \Delta(B)$ for some non-empty $B\subseteq A$, we say that $f$ is \emph{non-probabilistic}.
A non-probabilistic pSCF thus returns all lotteries over the alternatives selected by a set-valued social choice function, with which we identify it.
This identification behaves well: if a non-probabilistic pSCF satisfies one of our axioms, then the corresponding set-valued social choice function satisfies the analogous axiom in the standard (non-probabilistic) framework.

The central axioms we consider are population-consistency, cloning-consistency, composition-consistency, and Condorcet-consistency. They generalize the corresponding conditions for SCFs, with which they coincide on non-probabilistic pSCFs.

Population-consistency relates choices across varying electorates. More precisely, it requires that whenever a lottery is chosen simultaneously by two electorates, this lottery is also chosen by the union of both electorates. For example,
consider the two preference profiles $R'$ and $R''$ given below.
\[
\begin{array}{cc}
1 & 1 \\\midrule
a & b \\
b & c \\
c & a \\[2ex]
\multicolumn{2}{c}{R'}\\
\end{array}
\qquad\qquad
\begin{array}{cc}
1 & 1 \\\midrule
a & b \\
c & c \\
b & a \\[2ex]
\multicolumn{2}{c}{R''}\\
\end{array}
\qquad\qquad
\begin{array}{lcccr}
&1 & 1 & 2 & \\\cmidrule{2-4}
&a & a & b &\\
&b & c & c &\\
&c & b & a &\\[2ex]
\multicolumn{5}{c}{R' + R''}\\
\end{array}
\tag{Example 1}
\]
Population-consistency then demands that any lottery that is chosen in both $R'$ and $R''$ (say, $\nicefrac{1}{2}\, a + \nicefrac{1}{2}\, b$) also has to be chosen when both preference profiles are merged.
Formally, a pSCF satisfies population-consistency if for all $A\in\fone(U)$, $R',R''\in\prof[A]$,

\[f(R') \cap f(R'') \quad\subseteq\quad 
f(R' + R'')\text{.}
\tag{population-consistency}\]

Population-consistency is arguably one of the most natural axioms for variable electorates and is usually considered in a slightly stronger version, known as \emph{reinforcement} or simply \emph{consistency}, where the inclusion in the equation above is replaced with equality whenever the left-hand side is non-empty.
Note that population-consistency is merely a statement about abstract sets of outcomes, which makes no reference to lotteries whatsoever.
 It was first considered independently by \citet{Smit73a}, \citet{Youn74a}, and \citet{FiFi74a} and features prominently in the characterization of scoring rules by \citet{Smit73a} and \citet{Youn75a}. Population-consistency and its variants have found widespread acceptance in the social choice literature \citep[see, e.g.,][]{Youn74b,Fish78d,YoLe78a,Saar90a,Saar95a,Myer95b,CoMe12a,Piva13a,Bran13a,BrPe17a,BrPe19a,NePi18a,SFS19a,LaSk21a,Lede23a}. 

Cloning- and composition-consistency prescribe how pSCFs should deal with \emph{decomposable} preference profiles. 
For two agendas $A,B\in\fone(U)$, $B\subseteq A$ is a component in $R\in\prof[A]$ if the alternatives in $B$ are \emph{adjacent} in all preference relations that appear in $R$, i.e., for all $a\in A \setminus B$ and $b,b'\in B$, $a \pref b$ if and only if $a \pref b'$ for all ${\pref}\in\mathcal{L}(A)$ with $R({\pref})>0$. Intuitively, the alternatives in $B$ can be seen as variants or clones of the same alternative because they have exactly the same relationship to all alternatives that are not in $B$. For example, consider the following preference profile $R$ in which $B=\{b,b'\}$ constitutes a component. 

\[
\begin{array}{lll}
2 & 1 & 3 \\\midrule
a & a & b \\
b' & b & b' \\
b & b' & a \\[2ex]
\multicolumn{3}{c}{R}\\
\end{array}
\qquad\qquad
\begin{array}{ll}
3 & 3 \\\midrule
a & b \\
b & a \\ \\[2ex]
\multicolumn{2}{c}{R|_{A'}}\\
\end{array}
\qquad\qquad
\begin{array}{ll}
2 & 4 \\\midrule
b' & b \\
b & b' \\ \\[2ex]
\multicolumn{2}{c}{R|_B}\\ 
\end{array}
\tag{Example 2}
\]
The `essence' of $R$ is captured by $R|_{A'}$, where $A'=\{a,b\}$ contains only one of the cloned alternatives. It seems reasonable to demand that a pSCF should assign the same probability to $a$ (say, $\nicefrac{1}{2}$) independently of the number of clones of $b$ and the internal relationship between these clones. This condition is called cloning-consistency and was first proposed by \citet{Tide87a} \citep[see also][]{ZaTi89a}. 
For a formal definition of cloning-consistency, let $A',B\in\fone(U)$ and $A = A'\cup B$ such that $A'\cap B = \{b\}$. Then, a pSCF $f$ satisfies cloning-consistency if, for all $R\in\prof[A]$ such that $B$ is a component in $R$, 
\[\left\{(p_x)_{x\in A\setminus B}\colon p\in f(R)\right\} = \left\{(p_x)_{x\in A\setminus B}\colon p\in f(R|_{A'})\right\}\text{.}
\tag{cloning-consistency}
\]

When having a second look at Example~2, it may appear strange that cloning-consistency does not impose any restrictions on the probabilities that $f$ assigns to the clones. While clones behave completely identically with respect to uncloned alternatives, they are not indistinguishable from \emph{each other}. It seems that the relationships between clones ($R|_B$) should be taken into account as well. For example, one would expect that $f$ assigns more probability to $b$ than to $b'$ because two thirds of the voters prefer $b$ to $b'$. An elegant and mathematically appealing way to formalize this intuition is to require that the probabilities of the clones $b$ and $b'$ are directly proportional to the probabilities that $f$ assigns to these alternatives when restricting the preference profile to the component $\{b,b'\}$. This condition, known as \emph{composition-consistency}, is due to \citet{LLL96a} and was studied in detail for majoritarian SCFs \citep[see, e.g.,][]{Lasl96a,Lasl97a,Bran11b,BBS11a,Hora13a} and non-majoritarian SCFs \citep[see, e.g.,][]{Bran13a,OzTu20a,BCR+25a}.%

For a formal definition of composition-consistency, let $p\in\Delta(A')$ and $q\in\Delta(B)$ and define
	\[(p\times_b q)_x =
	\begin{cases}
		p_x &\text{if } x\in A\setminus B,\\
		p_b q_x\quad &\text{if } x \in B.
	\end{cases}\]
	The operator $\times_b$ is extended to sets of lotteries $X\subseteq\Delta(A')$ and $Y\subseteq\Delta(B)$ by applying it to all pairs of lotteries in $X\times Y$, i.e.,
	$X\times_b Y = \{p\times_b q\in\Delta(A)\colon p\in X \text{ and } q\in Y\}$.
	
	Then, a pSCF $f$ satisfies composition-consistency if for all $R\in\prof[A]$ such that $B$ is a component in $R$, 
\[f(R|_{A'})\times_b f(R|_B) \quad=\quad f(R) \text{.} \tag{composition-consistency}\]
In Example~2 above, $\nicefrac{1}{2}\, a+\nicefrac{1}{2}\, b\in f(R|_{A'})$,  $\nicefrac{2}{3}\, b+\nicefrac{1}{3}\, b'\in f(R|_B)$, and composition-consistency would imply that $\nicefrac{1}{2}\, a+\nicefrac{1}{3}\, b+\nicefrac{1}{6}\, b'\in f(R)$.

As pointed out by \citet{Bran13a}, composition-consistency implies cloning-consistency for both probabilistic and non-probabilistic SCFs.

For $A\in \fone(U)$, $R \in \prof[A]$, $x,y \in A$, the entries $M_R(x,y)$ of the \emph{majority margin matrix} $M_R$ denote the difference between the number of voters who prefer~$x$ to~$y$ and the number of voters who prefer~$y$ to~$x$, i.e.,
\[M_R(x,y)= R(x,y) - R(y,x)\text.\]
An alternative $x \in A$ is a \emph{weak Condorcet winner} for $R$ if $M_R(x,y) \ge 0$ for all $y \in A$, and $x$ is a (strict) \emph{Condorcet winner} for $R$ if $M_R(x,y) > 0$ for all $y \in A\setminus\{x\}$.

A pSCF $f$ is Condorcet-consistent if for all $A\in \fone(U)$, $R \in \prof[A]$
\[ 
f(R)=\{x\}\text{ whenever $x$ is a Condorcet winner for $R$.}\tag{Condorcet-consistency}
\]
$f$ is \emph{weakly Condorcet-consistent} if $x\in f(R)$ whenever $x$ is a weak Condorcet winner for $R$. The continuity of pSCFs implies that every Condorcet-consistent pSCF is also weakly Condorcet-consistent. To see this, let $R \in \prof[A]$ be a profile with weak Condorcet winner $x$ and $R' \in \prof[A]$ a profile consisting of a single voter who topranks $x$. 
We then have $f(R' + kR) = \{x\}$ for all $k\in\mathbb N$ because $x$ is a Condorcet winner in all these profiles. Hence, continuity implies that $x \in f(R)$.

\section{Results}
\label{sec:results}

\citet[][Theorem~1]{Bran13a} have shown that population-consistency and cloning-consistency are incompatible for non-probabilistic social choice functions in a fractional profile framework.\footnote{They state their theorem for composition-consistency but mention in Footnote~11 that cloning-consistency suffices.}
We transfer this statement to the framework of pSCFs by leveraging a mapping from one setting to the other (\Cref{lem:fractional-extension} in \Cref{app:proofs}).

\begin{theorem}
	\label{thm:non-probabilistic}
	There is no non-probabilistic pSCF that satisfies population-consistency and cloning-consistency.
\end{theorem}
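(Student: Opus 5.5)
The plan is to reduce to Theorem~1 of \citet{Bran13a}, which rules out non-probabilistic social choice functions on fractional profiles satisfying population-consistency and cloning-consistency. Given a non-probabilistic pSCF $f$ satisfying both axioms, I would build a function $\hat f$ on fractional profiles: for a rational-valued fractional profile $\hat R\in\Delta(\mathcal L(A))\cap\mathbb Q^{\mathcal L(A)}$, pick any $n\in\mathbb N$ with $n\hat R\in\prof[A]$ and set $\hat f(\hat R)=f(n\hat R)$. Homogeneity makes this well defined: for two admissible $n,n'$, both $f(n\hat R)$ and $f(n'\hat R)$ equal $f(nn'\hat R)$. Non-probabilisticness carries over verbatim.

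Next I would check that $\hat f$ inherits the two axioms. For cloning-consistency this is immediate, because restriction commutes with scaling: $(n\hat R)|_{A'}=n(\hat R|_{A'})$. Population-consistency in the fractional framework concerns convex combinations $\lambda\hat R'+(1-\lambda)\hat R''$ with rational $\lambda$. I would write $\lambda=\nicefrac{k}{m}$ and pick a common denominator $N$ so that $N\hat R'$ and $N\hat R''$ are integral. Then
\[m\bigl(\lambda\hat R'+(1-\lambda)\hat R''\bigr)N = k\,N\hat R'+(m-k)\,N\hat R''.\]
By homogeneity, $f(kN\hat R')=\hat f(\hat R')$ and $f((m-k)N\hat R'')=\hat f(\hat R'')$, so population-consistency of $f$ transfers to $\hat f$.

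The main obstacle is that Theorem~1 of \citet{Bran13a} is proved for SCFs on fractional profiles that also satisfy their decisiveness and upper hemi-continuity conditions (and possibly a faithfulness-type normalization). These are not directly available from our continuity and resolvability. I would therefore revisit their proof of the non-probabilistic impossibility and isolate exactly which profiles and limits it uses. I expect it uses only two things. First, on two alternatives the choice must be majority rule, with both alternatives chosen at ties. Second, at some point it takes a limit or needs a profile where the choice is a single alternative.

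I would re-derive the two-alternative fact in our setting. Faithfulness gives $f(\{(1,(x,y))\})\in\{\{x\},\Delta(\{x,y\})\}$. Suppose it were $\Delta(\{x,y\})$. Adding further $x\succ y$ voters keeps this outcome by homogeneity, so resolvability forces $f(x\succ y + y\succ x)$ to be a singleton. Combining this with population-consistency over copies, and with continuity applied to $R'+k(x\succ y)$, should yield a contradiction or the majority conclusion. Wherever their proof invokes upper hemi-continuity along a sequence $\hat R'+\epsilon\hat R$, I would replace it by our continuity axiom applied to $R'+kR$, which is precisely the integral counterpart after rescaling by homogeneity. Wherever it invokes decisiveness, I would use resolvability to obtain a nearby profile with a singleton outcome. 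With these substitutions the cyclic-profile-with-clones contradiction from their proof goes through, and the theorem follows.
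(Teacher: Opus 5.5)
Your proposal is essentially the paper's proof. You build the induced PSCF $F(R)=f(nR)$ (\Cref{lem:induced-PSCF}), check that non-probabilisticness, population-consistency and cloning-consistency carry over (\Cref{lem:fractional-extension}; your version for arbitrary rational weights covers slightly more than the paper's weight-$\nicefrac12$ formulation), and invoke Theorem~1 of \citet{Bran13a}. The paper does not re-audit their proof axiom by axiom: it verifies once that the induced $F$ satisfies the appendix's PSCF conditions, with decisiveness obtained by applying resolvability to $knR$, since $\frac{kn}{kn+1}R+\frac{1}{kn+1}{\succcurlyeq_k}\to R$. Also, you call the continuity axiom ``precisely the integral counterpart'' of upper hemi-continuity, which is not accurate: it only applies when the outcome is a constant singleton along the sequence, and is strictly weaker.
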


It is instructive to compare this theorem to other results concerning the compatibility of two types of consistency conditions. 
Population-consistency and Condorcet-consistency are incompatible \citep{YoLe78a}, as soon as there are at least nine voters \citep{BDP24a}.
Composition-consistency and Condorcet-consistency can be satisfied simultaneously by non-probabilistic pSCFs. This is, for example, the case for the \emph{essential set}, which returns all alternatives that receive positive probability in a maximal lottery, as well as for several variants of the \emph{uncovered set} and the \emph{Banks set}.
Some more discriminating non-probabilistic pSCFs, such as \emph{ranked pairs} and \emph{split cycle}, are Condorcet-consistent and cloning-consistent.

Let us now turn to the main theorem, a characterization of a pSCF called maximal lotteries, proposed by \citet{Fish84a}.\footnote{\citet{Krew65a} proposed a majoritarian version of maximal lotteries much earlier than Fishburn. This version was independently rediscovered by \citet{FeMa92a}, \citet{LLL93a}, and \citet{FiRy95a}.} 
It was rediscovered and studied extensively in subsequent work \citep[see, e.g.,][]{DuLa99a,Lasl00a,RiSh10a,BBS20a,BrBr17a,BrBr21a}.\footnote{For overviews of the properties of maximal lotteries, the reader is referred to \citet{Bran17a}, \citet{BBS20a}, and \citet{Bran22a}.}
Maximal lotteries (\ml) can be viewed as ``probabilistic weak Condorcet winners.'' Formally, for all $A\in\fone(U)$ and $R\in\prof[A]$,
\[
\ml(R)=\{p\in \Delta(A) \colon 
p^\mathsf{T} M_R \ge \mathbf{0}\}\text.
\tag{maximal lotteries}
\]
$M_R$ can be interpreted as the payoff matrix of a symmetric zero-sum game and maximal lotteries as the mixed maximin strategies (or Nash equilibrium strategies) of this game. It thus follows from the minimax theorem that $\ml(R)\neq\emptyset$ for all $R\in\mathcal{R}$ \citep{vNeu28a}. Moreover,
$\lvert \ml(R)\rvert=1$ whenever the number of voters in $R$ is odd \citep{LLL97a}. This implies that $\ml$ not only satisfies resolvability, but an even stronger condition:
For all $R \in \prof[A]$ such that $\lvert\ml(R)\rvert>1$, $\lvert \ml(R + {\pref})\rvert = 1$ for \emph{all} ${\pref} \in\mathcal L(A)$. This strong notion of resolvability is violated by virtually all common non-probabilistic pSCFs, including plurality, Borda's rule, Nanson's rule, etc.
The other three conditions we demand from pSCFs (homogeneity, faithfulness, and continuity) are easily seen to hold. The proof of the following theorem is deferred to \Cref{app:proofs}.

\begin{theorem}\label{thm:main}
    $\ml$ is the only pSCF that satisfies population-consistency and composition-consistency.
\end{theorem}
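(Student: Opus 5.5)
The plan is to reduce \Cref{thm:main} to the characterization of \citet{Bran13a} in the fractional framework, via the mapping of \Cref{lem:fractional-extension}. The reduction has two halves.

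The first half is routine. $\ml$ is a pSCF, since homogeneity, faithfulness, continuity and resolvability were argued above. It satisfies population-consistency, because $p^\mathsf{T}M_{R'}\ge\mathbf 0$ and $p^\mathsf{T}M_{R''}\ge\mathbf 0$ give $p^\mathsf{T}M_{R'+R''}\ge\mathbf 0$, as $M$ is additive. It satisfies composition-consistency by the known argument of \citet{Bran13a}, which only uses the majority margins and therefore transfers verbatim.

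The substantive direction is uniqueness. Given a pSCF $f$ satisfying both axioms, we extend it to fractional profiles by setting $f^*(R/n)=f(R)$. This is well defined by homogeneity.

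\textbf{Transferring the axioms.} Population-consistency in the fractional sense concerns convex combinations $\lambda R'+(1-\lambda)R''$ with rational $\lambda$. After clearing denominators, this becomes $f(aR')\cap f(bR'')\subseteq f(aR'+bR'')$, which follows from homogeneity and population-consistency. Composition-consistency transfers directly, because restriction commutes with scaling.

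\textbf{Replacing the missing hypotheses.} The original theorem also assumed decisiveness and upper hemi-continuity, which our $f^*$ need not satisfy. So we would re-run the proof of \citet{Bran13a} and check where those hypotheses are used.
\begin{itemize}
\item Decisiveness is used to obtain profiles arbitrarily close to a given one on which $f$ is single-valued. Resolvability gives this in integer form: for any $R$ there is a voter $\pref$ with $|f(R+{\pref})|=1$. Hence $R/n$ is approximated by $(kR+{\pref})/(kn+1)$, provided single-valuedness persists along $kR+\pref$. I would try to force this via population-consistency: if $f(R+{\pref})=\{p\}$ and some $q\in f(R)$, then $f(kR+R+{\pref})\ni p$ only when $p\in f(R)$. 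This step needs care.
\item Upper hemi-continuity is invoked in the original proof essentially only in the form ``if $f(R'+kR)=\{p\}$ for all $k$, then $p\in f(R)$.'' That is exactly our continuity axiom.
\end{itemize}

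With these replacements, the outline of \citet{Bran13a} goes through as follows.
\begin{enumerate}
\item On two alternatives, faithfulness together with population-consistency and continuity forces the majority winner, and a coin flip under a tie.
\item Composition-consistency plus population-consistency shows that $f$ depends only on the majority margins, and that on the resolute profiles $f$ must pick the maximal lottery. The key tool is a separating-hyperplane and linearity argument: the set of profiles mapped to $\{p\}$ is a convex cone.
\item Continuity and population-consistency extend this to $f\subseteq\ml$ everywhere, and then to equality.
\end{enumerate}

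Real-valued lotteries only enter through the linear algebra and cause no trouble, since $\ml$ is defined by linear inequalities.

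The main obstacle is the step just described: showing that resolvability and our weak continuity suffice wherever the original proof used density of resolute profiles together with closed graph. In particular, proving $\ml(R)\subseteq f(R)$ on non-resolute profiles requires, for each $p\in\ml(R)$, a profile $R'$ with $f(R'+kR)=\{p\}$ for all $k$. I would construct $R'$ by adding voters that make $p$ the unique maximal lottery (e.g.\ via its support and composition-consistency) and then invoke the already established equality on resolute profiles.
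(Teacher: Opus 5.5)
Your top-level reduction (the induced rule on fractional profiles via homogeneity, and the transfer of population- and composition-consistency) matches the paper. But the plan to re-run \citet{Bran13a} rests on two claims that fail, and these are exactly where the paper has to supply new arguments.

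\textbf{Continuity.} Upper hemi-continuity is not used there only in the form of your continuity axiom. Their two-alternative lemma (which also uses unanimity) and their Lemma~15 ($\ml\subseteq F$) rely on it, so both must be reproved. Note also that at a two-alternative tie the answer is all of $\Delta(\{x,y\})$, not a coin flip. Your sketch for $\ml\subseteq f$ does not yet close this gap. Making $p$ the unique maximal lottery of $R'$ does not give $f(R'+kR)=\{p\}$ for all $k$. For instance, if $R$ is a $3$-cycle and $R'$ the reversed $3$-cycle, the uniform lottery is the unique maximal lottery of each, yet $\ml(R'+R)=\Delta(A)$. The paper instead proceeds as follows:
\begin{itemize}
\item It takes $S$ from \Cref{lem:mcgarvey}, realizing a weighted cycle on $\supp(p)$ in which $p$ is quasi-strict.
\item Hence along $R^\epsilon=(1-\epsilon)R+\epsilon S$, every maximal lottery is supported on $\supp(p)$ and lies in the left kernel of that block.
\item $|\supp(p)|=k$ is made odd by cloning, so the leading $(k-1)$-minor of $M_S$ on $\supp(p)$ is nonzero. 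The corresponding minor of $M_{R^\epsilon}$ is therefore a nonzero polynomial in $\epsilon$, and uniqueness fails for only finitely many $\epsilon$.
\item One takes $R'=R^{1/\ell_0}$ beyond the last exception and uses $F\subseteq\ml$ plus non-emptiness.
\end{itemize}
This only works for rational $p$, since an irrational lottery is never the unique maximal lottery of a (rational) profile. You must first reduce to the rational vertices of $\ml(R)$ via convexity of $f(R)$.

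\textbf{Real-valued lotteries.} They are not harmless on the $f\subseteq\ml$ side. Since \citet{Bran13a} only admit rational lotteries, their Lemmas~6--13 exclude only rational violations $p\in F(R)\setminus\ml(R)$. A priori $f(R)$ could be a single irrational lottery outside $\ml(R)$, and convexity then yields no rational one. The paper needs \Cref{lem:rational-violation} for this. A cloning-and-averaging construction converts any violation into a profile with a Condorcet winner $c$ and a lottery in $F$ giving $c$ probability below $1$. A minimality argument, together with weak Condorcet-consistency near the uniform profile (\Cref{lem:condorcet-consistency}) and convexity, then makes that lottery rational.

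Relatedly, your step~3 does not explain how continuity could yield the upper bound $f\subseteq\ml$. The axiom only produces memberships $p\in f(R)$, and knowing $f=\ml$ on resolute profiles says nothing directly about extra lotteries at non-resolute ones. In the paper this inclusion is driven by population-consistency, composition-consistency and decisiveness.

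\textbf{Decisiveness.} This is easier than you feared. For fractional $R$ with $nR$ integral, apply resolvability to $knR$ for each $k$ separately, obtaining a voter $\pref_k$ that depends on $k$. Then $(knR+\pref_k)/(kn+1)\to R$ with $|f(knR+\pref_k)|=1$, and no persistence along a fixed voter is needed (your population-consistency argument would not supply it anyway).
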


\citet[][Remark~5]{Bran13a} also discuss an alternative characterization of \ml, which can be transferred to the framework of pSCFs.

\begin{theorem}\label{thm:alt}
    \ml is the only pSCF that satisfies population-consistency, cloning-consistency, and weak Condorcet-consistency.
\end{theorem}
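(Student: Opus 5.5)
The proof has two parts. First I check that \ml satisfies the three axioms. Then I show uniqueness, reducing as much as possible to \Cref{thm:main} and to its proof.

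\emph{\ml satisfies the axioms.} Population-consistency follows from linearity: $M_{R'+R''}=M_{R'}+M_{R''}$, so $p^\mathsf{T}M_{R'}\ge\mathbf 0$ and $p^\mathsf{T}M_{R''}\ge\mathbf 0$ give $p^\mathsf{T}M_{R'+R''}\ge\mathbf 0$. Weak Condorcet-consistency holds because a weak Condorcet winner $x$ has a nonnegative row in $M_R$, so $x\in\ml(R)$. Cloning-consistency is implied by composition-consistency, which \ml satisfies by \Cref{thm:main}.

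\emph{Uniqueness.} Let $f$ be a pSCF satisfying the three axioms. I would follow the structure of the uniqueness proof of \Cref{thm:main} (as in \citet{Bran13a}) and locate the places where composition-consistency is used. Following Remark~5 of \citet{Bran13a}, those uses are only two.
\begin{enumerate}
\item Composition-consistency is used to show that $f$ is ``Condorcet-like'' on two-alternative profiles and on profiles with a Condorcet winner. Weak Condorcet-consistency gives this directly. For two alternatives $x,y$ with $M_R(x,y)>0$, $x$ is a Condorcet winner. Weak Condorcet-consistency gives $x\in f(R)$, and I would show $f(R)=\{x\}$ as follows. Resolvability together with population-consistency and homogeneity pins down a singleton on a dense family of profiles. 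Continuity then transfers the conclusion.
\item Composition-consistency is used to handle how probabilities split among clones. Here I would argue that cloning-consistency suffices, because the key construction in \citet{Bran13a} only compares the probabilities of non-cloned alternatives across $R$ and $R|_{A'}$.
\end{enumerate}

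With these two ingredients, the skeleton runs as follows.
\begin{enumerate}
\item Population-consistency and resolvability make $f$ single-valued on a dense set of profiles.
\item Using the consistency axioms, $f$ depends only on $M_R$ and is affine in an appropriate sense.
\item Using clones to blow up alternatives and weak Condorcet-consistency, show $f(R)\subseteq\ml(R)$. The clone construction is what makes this inclusion hold.
\item Use continuity and population-consistency to obtain $\ml(R)\subseteq f(R)$.
\end{enumerate}
Transfer to the finite-voter setting proceeds as in \Cref{thm:main}, via \Cref{lem:fractional-extension}. This lemma extends $f$ to fractional profiles and preserves the axioms. The preservation of cloning-consistency and weak Condorcet-consistency under the extension must be checked, but both statements are homogeneous in $R$, so this should be routine.

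\emph{Main obstacle.} The hard step is verifying that the proof of \Cref{thm:main} really uses composition-consistency only through the two points above. If there is a genuine additional use, the fallback is a separate argument for $f(R)\subseteq\ml(R)$. Suppose $p\in f(R)$ and $(p^\mathsf{T}M_R)_y<0$ for some $y$. Clone alternatives in proportion to $p$, so that in the cloned profile $R^*$ the relevant lottery is ``uniform''. Then add a profile in which $y$ is a weak Condorcet winner, and use population-consistency to show that $f$ must place weight on $y$ in the cloned profile. Cloning-consistency then transports this back to $R$, producing a contradiction with $p_y$ being too small. Carrying this out is where I expect real-valued probabilities to need the continuity axiom.
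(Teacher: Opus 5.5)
You have the intended route, and several parts are fine. The paper does not write out a proof of this theorem, but \Cref{lem:fractional-extension} points to the route you outline. Your check that $\ml$ satisfies the axioms is correct, and so is the passage to fractional profiles. The final inclusion $\ml\subseteq F$ also goes through once $F\subseteq\ml$ is known, by rerunning \Cref{lem:ml-subseteq-f}: its only inputs about $F$ are $F\subseteq\ml$, convexity, continuity, neutrality, and, for the reduction to odd $k$, the blow-down direction that cloning-consistency provides.

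The gap is $F\subseteq\ml$ itself. It carries all the weight, you assert it rather than prove it, and your reason why cloning-consistency should suffice is wrong for the argument in this paper. Cloning-consistency pins down the probabilities outside a component, hence the total mass on it, but says nothing about how that mass is split among the clones. The available substitute is symmetrization. With a uniform internal profile, the profile is invariant under permuting the clones, so neutrality and convexity put the \emph{uniform} split into $F$. This covers the uses of composition-consistency that need only uniform splits and blow-downs. Examples are cloning rational lotteries (each $a$ into a number of copies proportional to $p(a)$) and the second half of the proof of \Cref{lem:rational-violation}.

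It does not cover \Cref{claim:rational-violation}, which converts an arbitrary, possibly irrational, violation into a violation of Condorcet-consistency. There each $a$ keeps $q(a)-\kappa k(a)$ while every clone in $C$ receives exactly $\kappa$; otherwise averaging over permutations of $C$ would not preserve the lottery. These ratios are in general irrational, and the paper obtains such splits from tied two-alternative components via composition-consistency. Since pSCFs may select real-valued lotteries and $F(R)\setminus\ml(R)$ need not contain any rational point, a new idea is needed exactly here, and your proposal supplies none.

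Your fallback does not help either:
\begin{itemize}
\item Cloning in proportion to $p$ presupposes rational $p$.
\item Adding a profile with weak Condorcet winner $y$ gives nothing via population-consistency unless both profiles are known to share a lottery.
\item Continuity only ever adds lotteries to $F(R)$; it never excludes any.
\end{itemize}

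The same misreading of continuity affects your two-alternative step. Applied to the dense family of singleton-valued profiles, continuity only returns $x\in F(R)$, which you already have, and cannot give $F(R)=\{x\}$. The exclusion has to come from decisiveness. At the tied profile $R^0$, weak Condorcet-consistency and convexity give $F(R^0)=\Delta(\{x,y\})$. If some $p\neq x$ lay in $F(R)$, the segment $[x,p]$ would lie in $F(R)\cap F(R^0)$. Population-consistency would then put it into $F(R')$ for every $R'$ between $R^0$ and $R$. So $F$ would be multi-valued on a whole interval of profiles, contradicting decisiveness. This replaces the first two steps of \Cref{lem:two-alternatives} without using composition-consistency, or even faithfulness.
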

\todo{FB: Give proof of this result in the Appendix. A proof only appeared in \citet{Bran18a}.}

Note that \ml also satisfies composition-consistency (which is stronger than cloning-consistency) and Condorcet-consistency (which is stronger than weak Condorcet-consistency).

\todo{FB: Is convex-valuedness required? We use it for the $m=2$ proof. 
Interessant wäre (auch mit fractional profiles) zu verstehen, ob man auf zwei Alternativen beim tie nur die beiden degenerierten Lotterien zurückliefern kann ohne die restlichen Axiome (auf mehr Alternativen zu verletzen).
}

\todo{FFX: Ich fände es nach wie vor schön, wenn man ohne Unendlichkeitsannahmen auskommen würde. Beim universe of alternatives sehe ich dafuer leider immer noch keinen Weg, und kann mir auch vorstellen, dass das Resultat mit endlichem universe nicht gilt (vielleicht aber approximativ). Ausserdem wäre es schön eine Schranke an die Anzahl der Wähler zu haben, ähnlich wie in finitary. Also eine Aussage wie ``wenn die Axiom auf Profilen mit bis zu N Wählern erfüllt sind, dann $f \subseteq \ml$ auf Profilen mit bis zu n Wählern'' (wobei N von n abhängen darf). Mit einer Schranke an das universe scheint das denkbar, andernfalls eher nicht.}

\todo{FB: Anonymität abzuschwächen erscheint wenig sinnvoll. Population-consistency ohne Anonymität ist komisch. Neben lowest-index-dictatorship sollten auch (global) gewichtete Varianten von ML die restlichen Axiome erfüllen.}

\todo{FB: There were some minor typos and errors in the original conlott paper. We can track those in git and by looking at our old emails. For example, the determinant used in Lemma 15 was wrong. We could mention these corrections here.}

\todo{FB: We could point out similarities and differences to the Nash (and maximin) paper.}

\todo{FFX: Zwei Punkte die es denke ich wert wären genauer zu erklären sind:
1. die Unterschiede zwischen den Annahmen in conlott.tex und conlott2.tex
2 ein Vergleich zwischen nash/maxinin und conlott (insbesondere dass conlott ohne C2 auskommt).}

\subsection*{Acknowledgements}
This material is based on work supported by the Deutsche Forschungsgemeinschaft under grant {BR~2312/14-1}.


\newpage

\appendix

\section*{APPENDIX}

\section{Proofs}\label{app:proofs}

Throughout this appendix, we exploit the connection between our model and the fractional-profile framework of \citet{Bran13a}. Every homogeneous pSCF~$f$ induces a PSCF~$F$ on fractional profiles (\Cref{lem:induced-PSCF}) that inherits the consistency properties of~$f$ (\Cref{lem:fractional-extension}); \Cref{thm:main} thus reduces to showing that $F = \ml$, which lets us reuse much of the machinery of \citet{Bran13a}. We establish the two inclusions separately. For $F\subseteq\ml$ (\Cref{lem:f-subseteq-ml}), we reprove the two-alternative case (\Cref{lem:two-alternatives}) from scratch, since our faithfulness and continuity axioms are weaker than the unanimity and upper hemi-continuity assumed by \citeauthor{Bran13a}; for larger agendas, their argument carries over, except that, as we admit real-valued (rather than only rational-valued) probabilities, we first reduce any violation of maximality to one with rational probabilities (\Cref{lem:rational-violation}). This reduction invokes the fact that population- and composition-consistency imply weak Condorcet-consistency near the uniform profile, which we restate as \Cref{lem:condorcet-consistency} \citep[][Lemma~6]{Bran13a}. The reverse inclusion $\ml\subseteq F$ (\Cref{lem:ml-subseteq-f}) is the only part of the characterization that hinges on continuity: because we weaken upper hemi-continuity, the argument of \citeauthor{Bran13a} (their Lemma~15) no longer applies, and we give a self-contained proof in which the relevant maximal lottery is the \emph{unique} one along a sequence of profiles converging to the given profile. It relies on a construction of skew-symmetric matrices that we recall as \Cref{lem:mcgarvey} \citep[][Lemma~14]{Bran13a}; along the way, we 
simplify the argument by reducing the case of an even-sized support to that of an odd-sized one.

For a finite set $X$, let $\Delta_\mathbb{Q}(X) = \Delta(X) \cap\mathbb Q^{X}$ be the set of probability distributions with rational values.
The set of fractional preference profiles for an agenda $A\in\fone(U)$ is $\fprof[A] = \Delta_{\mathbb Q}(\mathcal L(A))$, which can be associated with the $(|A|!-1)$-dimensional unit simplex in $\mathbb Q^{\mathcal L(A)}$. 
Therefore, $R({\pref})$ is the fraction of voters with preference relation ${\pref}\in\mathcal{L}(A)$. 
The set of fractional preference profiles is $\mathcal R^* = \bigcup_{A\in\fone(U)} \fprof[A]$.

A probabilistic social choice function $F$ on fractional profiles (PSCF) is a function that, for any agenda $A\in\fone(U)$, maps a fractional preference profile $R\in \fprof[A]$ to a non-empty and convex subset of $\Delta(A)$, and satisfies the following three properties.
\settowidth{\firstcol}{\emph{Resolvability:}}

\medskip
\noindent
\begin{tabular}{@{}p{\firstcol}p{\linewidth-\firstcol}} %
\emph{Faithfulness:} & For all $x,y\in U$,
$
    R(\{(x,y)\}) = 1 \text{ implies } F(R) \neq \{y\}
$.\\
\emph{Continuity:} & For all $R\in\fprof$, if there exist $R'\in\fprof$ and a lottery $p$ such that\\&
$
    F(\nicefrac1k\, R' + (1 - \nicefrac1k)\, R) = \{p\} \text{ for all $k\in\mathbb N$, then } p \in F(R)
$.\\
\emph{Decisiveness:} & $\{R\in\fprof[A]\colon |F(R)| = 1\}$ is dense in $ \fprof[A]$.\\
\end{tabular}
\medskip

Note that the continuity condition used here is weaker than upper hemi-continuity, as used by \citet{Bran13a}. As a consequence, we need to reprove some of their statements using this weaker condition. 

A PSCF is non-probabilistic if for each $A\in\fone(U)$ and $R\in\fprof[A]$, $F(R) = \Delta(B)$ for some non-empty $B\subseteq A$.
A PSCF satisfies population-consistency if for all $A\in\fone(U)$, $R',R''\in\fprof[A]$,
\[F(R') \cap F(R'') \quad\subseteq\quad 
F(\nicefrac12\, R' + \nicefrac12\, R'')\text{.}
\tag{population-consistency}\]

A PSCF satisfies composition-consistency, if for all $A',B\in\fone(U)$ and $A = A'\cup B$ such that $A'\cap B = \{b\}$, and all $R\in\fprof[A]$ such that $B$ is a component in $R$, 
\[F(R|_{A'})\times_b F(R|_B) \quad=\quad F(R) \text{.} \tag{composition-consistency}\]
Moreover, it satisfies cloning-consistency if for all $A',B\in\fone(U)$ and $A = A'\cup B$ such that $A'\cap B = \{b\}$, and all $R\in\fprof[A]$ such that $B$ is a component in $R$, 
\[\{(p_x)_{x \in A\setminus B}\colon p \in F(R)\} = \{(p_x)_{x\in A\setminus B}\colon p \in F(R|_{A'})\} \text{.} \tag{cloning-consistency}\]
Composition-consistency implies cloning-consistency.

Every homogeneous pSCF induces a PSCF.
If $f$ is homogeneous, define $F$ as follows: for each $A\in\fone(U)$ and $R\in\fprof[A]$, let $n\in\mathbb N$ such that $nR \in \prof[A]$ and let
\[F(R) = f(nR).\]
This construction gives a well-defined PSCF.
\begin{lemma}
	\label{lem:induced-PSCF}
	If $F$ is induced by $f$, then $F$ is a well-defined PSCF.
\end{lemma}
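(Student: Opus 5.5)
We must check three things: that $F(R)$ does not depend on the choice of $n$, that $F(R)$ is a non-empty convex subset of $\Delta(A)$, and that $F$ satisfies the three defining properties of a PSCF (faithfulness, continuity, decisiveness).

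\emph{Well-definedness.} Since $R\in\Delta_{\mathbb Q}(\mathcal L(A))$, some $n\in\mathbb N$ with $nR\in\prof[A]$ exists, e.g.\ a common denominator of the entries of $R$. Suppose $nR,n'R\in\prof[A]$. Homogeneity applied twice gives $f(nR)=f(n'nR)=f(nn'R)=f(n'R)$, so $F(R)$ is independent of the choice of $n$. Non-emptiness and convexity of $F(R)$ are inherited from $f(nR)$.

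\emph{Faithfulness.} If $R(\{(x,y)\})=1$, take $n=1$. Then $nR$ is the profile consisting of a single voter with $x\succ y$, and faithfulness of $f$ gives $F(R)=f(R)\neq\{y\}$.

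\emph{Continuity.} Let $R,R'\in\fprof[A]$ with $F(\nicefrac1k R'+(1-\nicefrac1k)R)=\{p\}$ for all $k$. Choose $n$ with $nR,nR'\in\prof[A]$ and set $S=nR$, $S'=nR'$. Then
\[
nk\bigl(\nicefrac1k R'+(1-\nicefrac1k)R\bigr)=S'+(k-1)S\in\prof[A],
\]
so $f(S'+(k-1)S)=\{p\}$ for all $k\ge1$, i.e.\ $f(S'+jS)=\{p\}$ for all $j\ge0$. Continuity of $f$ yields $p\in f(S)=F(R)$.

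\emph{Decisiveness.} This is the main step, where resolvability must be converted into density. Fix $R\in\fprof[A]$ and $\epsilon>0$. Choose $n$ with $nR\in\prof[A]$ and let $m\ge1$ be so large that $\nicefrac{1}{(mn+1)}<\epsilon/2$. By resolvability there is ${\pref}\in\mathcal L(A)$ with $|f(mnR+{\pref})|=1$. Define
\[
\tilde R=\frac{mnR+{\pref}}{mn+1}\in\fprof[A].
\]
Then $F(\tilde R)=f(mnR+{\pref})$ is a singleton. Moreover,
\[
\tilde R-R=\frac{{\pref}-R}{mn+1},
\]
whose $\ell_1$-norm is at most $\nicefrac{2}{(mn+1)}<\epsilon$. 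Hence resolute profiles are dense in $\fprof[A]$.

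The only subtlety is to use resolvability at a scaled-up profile, so that the single added voter becomes negligible. No step appears to be a serious obstacle.
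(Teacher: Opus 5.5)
Your proof is correct and follows the paper's argument step for step. Well-definedness comes from homogeneity, faithfulness from the one-voter profile $1\cdot R$, and continuity from the identity $nk(\tfrac1k R'+(1-\tfrac1k)R)=nR'+(k-1)nR$; decisiveness comes from applying resolvability to the scaled profile $mnR$, so the added voter carries weight $1/(mn+1)$, and your explicit $\ell_1$ bound $2/(mn+1)$ simply makes the paper's ``$R_k\to R$'' quantitative.
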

\begin{proof}
  First, $F$ is well-defined.
  Let $R\in\fprof[A]$ and let $m,n\in\mathbb N$ such that $mR,nR\in\prof[A]$.
  By homogeneity of $f$, $f(mR)=f(nmR)=f(mnR)=f(nR)$, and thus $F(R)$ does not depend on the chosen scaling factor.

  Second, $F$ is a PSCF.
  Faithfulness is immediate: if $R(\{(x,y)\}) = 1$, then $R = 1R$ is a non-fractional profile with one voter who prefers $x$ to $y$, and therefore $f(R) \neq \{y\}$.
  Convex-valuedness holds because $F(R)=f(nR)$ for some $n\in \mathbb N$ and $f(nR)$ is convex.
  For continuity, fix $R\in\fprof[A]$.
  Suppose there is $R'\in\fprof[A]$ such that $F(\nicefrac1k\, R' + (1-\nicefrac1k)\, R) = \{p\}$ for all $k\in\mathbb N$.
  Let $n\in\mathbb N$ such that $nR\in\prof[A]$ and $nR'\in\prof[A]$.
  Then, $f(nR' + (k-1)nR) = \{p\}$ for all $k\in\mathbb N$.
  Because $f$ is continuous, $p \in f(nR) = F(R)$.
  For decisiveness, fix $R\in\fprof[A]$ and choose $n\in\mathbb N$ with $nR\in\prof[A]$.
  For each $k\in\mathbb N$, resolvability of $f$ yields ${\pref}_k\in\mathcal L(A)$ such that
  $|f(knR+{\pref}_k)|=1$.
  Let
  \[
    R_k=\frac{knR+{\pref}_k}{kn+1}=\frac{kn}{kn+1}R+\frac{1}{kn+1}{\pref}_k\text{.}
  \]
  Then $R_k\to R$ as $k\to\infty$ and
  $F(R_k)=f(knR+{\pref}_k)$, so $|F(R_k)|=1$.
  Thus $\{R\in\fprof[A]\colon |F(R)|=1\}$ is dense in $\fprof[A]$.	
\end{proof}
The PSCF induced by the pSCF $\ml$ is also denoted by $\ml$.
A pSCF that satisfies any of our axioms induces a PSCF that also satisfies that axiom.
\begin{lemma}\label{lem:fractional-extension}
  	Let $f$ be a pSCF and denote by $F$ the PSCF induced by $f$.
	If $f$ is population-consistent, composition-consistent, cloning-consistent, (weakly) Condorcet-consistent, or non-probabilistic, then so is $F$.
\end{lemma}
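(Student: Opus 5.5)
The proof will go property by property, each time reducing a statement about fractional profiles to one about integer profiles by clearing denominators. The basic tool is well-definedness from \Cref{lem:induced-PSCF}: for any finite family of fractional profiles we can pick a common $n\in\mathbb N$ that makes every $nR$ integral. Since $F(R)=f(nR)$ for any admissible $n$, we may use the same $n$ throughout.

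\emph{Population-consistency.} Let $R',R''\in\fprof[A]$ and pick $n$ with $nR',nR''\in\prof[A]$. Then $2n$ also clears denominators of $\nicefrac12\,R'+\nicefrac12\,R''$, and $2n(\nicefrac12\,R'+\nicefrac12\,R'')=nR'+nR''$. Hence
\[F(R')\cap F(R'')=f(nR')\cap f(nR'')\subseteq f(nR'+nR'')=F(\nicefrac12\,R'+\nicefrac12\,R'').\]

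\emph{Composition- and cloning-consistency.} The key observations are that restriction commutes with scaling, $(nR)|_B=n(R|_B)$, and that $B$ is a component of $R$ exactly when it is a component of $nR$, because the set of preference relations with positive weight is unchanged by scaling. So if $B$ is a component of $R\in\fprof[A]$ and $nR\in\prof[A]$, then
\[F(R)=f(nR)=f((nR)|_{A'})\times_b f((nR)|_B)=F(R|_{A'})\times_b F(R|_B).\]
Cloning-consistency follows in the same way. One small point needs checking here: $R|_{A'}$ and $R|_B$ should again be fractional profiles, i.e.\ rational distributions summing to one, which holds because restriction preserves total mass.

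\emph{Condorcet properties and non-probabilistic.} We use $M_{nR}=nM_R$ for fractional margins defined analogously. Then $x$ is a (weak) Condorcet winner of $R$ iff it is one of $nR$, and $F(R)=f(nR)$ gives the claim directly. Non-probabilisticness transfers because each $F(R)$ literally equals some $f(nR)=\Delta(B)$.

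I expect no real obstacle. The step requiring the most care is the bookkeeping in population-consistency: the scaling factor for the mixture must be the same one used for $R'$ and $R''$, which is why we use $2n$ (or simply choose $n$ even). The rest consists of verifying that component structure and majority margins are scale-invariant.
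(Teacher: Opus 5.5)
Your proof is correct and matches the paper's argument essentially step for step: clear denominators with a common $n$, use $2n(\nicefrac12\,R'+\nicefrac12\,R'')=nR'+nR''$ for population-consistency, use that scaling preserves components and commutes with restriction for composition- and cloning-consistency, and treat the Condorcet and non-probabilistic cases as immediate. One small point: your parenthetical ``or simply choose $n$ even'' is loose, since an even $n$ that clears $R'$ and $R''$ need not have $\nicefrac n2$ clearing them; your main line with $2n$ is exactly right.
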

\begin{proof} 
	To prove population-consistency, let $R',R''\in\fprof[A]$ and $p\in F(R')\cap F(R'')$.
  Choose $m\in\mathbb N$ with $mR',mR''\in\prof[A]$.
  By homogeneity, $F(R')=f(mR')$ and $F(R'')=f(mR'')$.
  Population-consistency of $f$ yields $p\in f(mR'+mR'')$, and since
  $2m(\nicefrac{1}{2}R'+\nicefrac{1}{2}R'')=mR'+mR''$, we obtain
  $p\in F(\nicefrac{1}{2}R'+\nicefrac{1}{2}R'')$.

  To prove composition-consistency, let $A',B\in\fone(U)$ with $A'\cap B=\{b\}$, let $A=A'\cup B$, and let
  $R\in\fprof[A]$ such that $B$ is a component in $R$.
  Choose $m$ with $mR\in\prof[A]$.
  Then $B$ is also a component in $mR$, and composition-consistency of $f$ gives
  \[
    f(mR)=f(mR|_{A'})\times_b f(mR|_B)\text{.}
  \]
  By homogeneity, $f(mR|_{A'})=F(R|_{A'})$ and $f(mR|_B)=F(R|_B)$, so
  $F(R)=F(R|_{A'})\times_b F(R|_B)$.
  
  The proof of cloning-consistency is similar.
  
  It is immediate that if $f$ is (weakly) Condorcet-consistent or non-probabilistic, then so is $F$.
  The proofs are omitted.
\end{proof}

In the proof arguments below, we will make use of the fact that cloning-consistency implies neutrality. While \citet[][Lemma 1]{Bran13a} state this implication for composition-consistency, their proof only uses cloning-consistency.

\subsection{$F\subseteq \ml$}

First, we consider the case of two alternatives.

\begin{lemma}\label{lem:two-alternatives}
    Let $F$ be a PSCF that satisfies population-consistency and composition-consistency.
    Let $x,y\in U$.
    Then, for each $R\in\fprof[\{x,y\}]$, $F(R) = \ml(R)$.
\end{lemma}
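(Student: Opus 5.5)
The plan is to parametrize $\fprof[\{x,y\}]$ by $R_\alpha = \alpha\,\{(x,y)\} + (1-\alpha)\,\{(y,x)\}$ with $\alpha\in[0,1]\cap\mathbb Q$. Then $\ml(R_\alpha)=\{x\}$ for $\alpha>\nicefrac12$, $\ml(R_\alpha)=\{y\}$ for $\alpha<\nicefrac12$, and $\ml(R_{1/2})=\Delta(\{x,y\})$. I would establish three facts in order:
\begin{enumerate}
\item whenever $F(R_\alpha)$ is a singleton, it is a degenerate lottery;
\item the degenerate outcome is $x$ for $\alpha>\nicefrac12$ and $y$ for $\alpha<\nicefrac12$;
\item $F(R_{1/2})=\Delta(\{x,y\})$.
\end{enumerate}

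The first step uses composition-consistency on three alternatives. Take $z\in U\setminus\{x,y\}$ and the profile $R$ on $\{x,y,z\}$ in which a fraction $\alpha$ of the voters has $x\pref y\pref z$ and the remaining fraction $1-\alpha$ has $z\pref y\pref x$. Both $\{y,z\}$ and $\{x,y\}$ are components of $R$. Moreover, every pairwise restriction of $R$ is a copy of $R_\alpha$ up to renaming, so by neutrality (implied by cloning-consistency) it has the same outcome as $R_\alpha$.

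Suppose $F(R_\alpha)=\{g\,x+(1-g)\,y\}$. Decomposing along $\{y,z\}$ gives $F(R)=\{g\,x+(1-g)g\,y+(1-g)^2 z\}$. Decomposing along $\{x,y\}$ (with $y$ as the representative) gives $F(R)=\{g^2x+g(1-g)y+(1-g)z\}$. Equating the probabilities of $x$ yields $g=g^2$, so $g\in\{0,1\}$. By neutrality, $F(R_{1-\alpha})$ is the image of $F(R_\alpha)$ under swapping $x$ and $y$.

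Next I would show that for each lottery $p$ the set $S_p=\{\alpha\colon p\in F(R_\alpha)\}$ is convex over the rationals. Population-consistency directly yields midpoints, hence all dyadic combinations. For general rational weights I would follow \citet{Bran13a}: write $\lambda R'+(1-\lambda)R''$ as an equal-weight merge of suitable intermediate profiles that are already known to contain $p$. Alternatively, midpoint convexity, continuity, and a decisive sequence can be combined to fill the gaps.

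Combining this with the first step: by decisiveness, the decisive $\alpha$ are dense, and at each of them the outcome is $x$ or $y$. By convexity of $S_x$ and $S_y$, any decisive point lying between two decisive points with outcome $x$ again has outcome $x$. By convexity of $F(R_\alpha)$, the outcome cannot be $y$ at a decisive point that also lies in $S_x$. Hence there is a threshold $t$ such that decisive points on one side choose $x$ and those on the other side choose $y$. The swap symmetry forces $t=\nicefrac12$ and fixes the orientation up to a global flip. Faithfulness rules out the flip: if decisive points above $\nicefrac12$ chose $y$, continuity along a decisive sequence approaching $\alpha=1$ would put $y$ in $F(R_1)$, and convexity together with the first step (applied if $R_1$ is decisive) would give $F(R_1)=\{y\}$. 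This contradicts faithfulness.

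For non-decisive $\alpha>\nicefrac12$, I would use population-consistency between decisive points on both sides of $\alpha$ to get $x\in F(R_\alpha)$. Then I would argue that any other $p\in F(R_\alpha)$ would propagate, via population-consistency with a nearby decisive profile whose outcome is $\{x\}$, into a decisive profile containing both $p$ and $x$, which is impossible.

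Finally, for $\alpha=\nicefrac12$, continuity applied along $R_{1/2+1/k}$-type sequences (choosing $R'=R_1$ and $R'=R_0$) gives $x,y\in F(R_{1/2})$. Convexity then yields $F(R_{1/2})=\Delta(\{x,y\})$.

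I expect the main obstacle to be the extension from the dense set of decisive points to all $\alpha$, in particular excluding extra lotteries at non-decisive $\alpha\neq\nicefrac12$. This is where the weak form of continuity, which only applies along sequences whose outcomes are a constant singleton, and the rational-weight form of population-consistency must be handled with care.
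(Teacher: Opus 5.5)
Your proposal has a genuine gap. Your three-alternative computation $g=g^2$ is correct, but you apply it only to singleton outcomes. As a result, the non-decisive profiles $R_\alpha$ with $\alpha\neq\nicefrac12$ are never controlled, and the mechanism you propose for them fails. Population-consistency only transfers lotteries that lie in \emph{both} outcome sets. If $p\in F(R_\alpha)\setminus\{x\}$ and $F(R_\beta)=\{x\}$, then $F(R_\alpha)\cap F(R_\beta)=\{x\}$, so only $x$ reaches the mixtures and $p$ is never moved to a profile where decisiveness could be invoked.

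The same gap breaks your two uses of continuity. The axiom requires $F(\nicefrac1k\, R'+(1-\nicefrac1k)\, R)=\{p\}$ for \emph{every} $k\in\mathbb N$, i.e., along a fixed sequence such as $R_{1-(1-\beta)/k}$. Density of decisive points does not supply such a sequence: the nowhere dense set $\{1-\nicefrac1m\colon m\ge 3\}$ meets every such sequence with $\beta\in(\nicefrac12,1)$ (take $k$ to be the numerator of $1-\beta$). Moreover, even granting $y\in F(R_1)$, faithfulness only forbids $F(R_1)=\{y\}$. So there is no contradiction unless you already know that $R_1$ is decisive, which is exactly the unresolved case.

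The missing idea is to run your computation on the whole set $F(R_\alpha)$. Let $G=\{g\colon g\,x+(1-g)\,y\in F(R_\alpha)\}$, which is an interval by convexity. The two decompositions of your profile give the sets of $x$-probabilities $G$ and $\{gh\colon g,h\in G\}$. Since $G\subseteq[0,1]$, this forces $\inf G=(\inf G)^2$ and $\sup G=(\sup G)^2$. Hence either $F(R_\alpha)$ is a single degenerate lottery, or $\mathrm{int}(\Delta(\{x,y\}))\subseteq F(R_\alpha)$. In the latter case, neutrality gives the same for $R_{1-\alpha}$, and population-consistency gives it for all $R_\beta$ with $\beta$ between $1-\alpha$ and $\alpha$. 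This contradicts decisiveness when $\alpha\neq\nicefrac12$.

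This is the paper's first step, and it makes every non-tie profile decisive with a degenerate outcome. The paper then fixes the orientation without continuity. Faithfulness gives $F(R_0)=\{y\}$. If $y\in F(R_\alpha)$ for some $\alpha>\nicefrac12$, population-consistency and neutrality put both $y$ and $x$ into $F(R_\beta)$ for $\beta\in(\nicefrac12,\alpha)$, which is impossible by the first step. Once this is in place, your threshold argument is no longer needed, and your tie step goes through exactly as in the paper.

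Separately, equal-weight merging cannot yield arbitrary rational weights from midpoints; the dyadic rationals are closed under midpoints. You do not need it, though: the induced $F$ inherits rational-weight population-consistency directly from $f$ and homogeneity, and the paper's argument uses it in that form.
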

\begin{proof}
  Let $R\in \fprof[\{x,y\}]$ with $R(x,y) > R(y,x)$.
  \setcounter{step}{0}
  \begin{step}\label{step:two-alternatives1}
    First, we show that for each $p\in F(R)$, $p(x) \in\{0,1\}$, that is, $F(R) \subseteq\{x,y\}$.
  Assume for contradiction that there is $p\in F(R)$ with $p(x) \in (0,1)$.
  It follows from composition-consistency and convex-valuedness that $\mathrm{int}(\Delta(\{x,y\}))\subseteq F(R)$.
Since $F$ satisfies composition-consistency, it also satisfies neutrality, which implies that $\mathrm{int}(\Delta(\{x,y\}))\subseteq F(R^{x\leftrightarrow y})$, where $R^{x\leftrightarrow y}$ is the profile obtained from $R$ by swapping the labels of $x$ and $y$.
Then, population-consistency implies that $\mathrm{int}(\Delta(\{x,y\}))\subseteq F(R')$ for each $R'\in\fprof[\{x,y\}]$ with $R(x,y) \ge R'(x,y) \ge R(y,x) = R^{x\leftrightarrow y}(x,y)$.
This contradicts decisiveness.
  \end{step}
  \begin{step}
    Second, we show that $y\not\in F(R)$.
Assume for contradiction that $y\in F(R)$.
If $R'\in\fprof[\{x,y\}]$ satisfies $R'(x,y) = 0$, then all voters prefer $y$ to $x$; by \Cref{step:two-alternatives1} (applied to $y$ via neutrality) and convex-valuedness, $F(R')$ is a single degenerate lottery, and since faithfulness yields $F(R')\neq\{x\}$, we obtain $y \in F(R')$.
Population-consistency implies that $y\in F(R')$ for each $R'\in \fprof[\{x,y\}]$ with $R(x,y) \ge R'(x,y) \ge 0$.
By neutrality, $x\in F(R')$ for each $R'\in\fprof[\{x,y\}]$ with $1 \ge R'(x,y) \ge R(y,x)$.
Thus, by convex-valuedness, $\Delta(\{x,y\})\subseteq F(R')$ for each $R'\in\fprof[\{x,y\}]$ with $R(x,y) \ge R'(x,y) \ge R(y,x)$.
This contradicts \Cref{step:two-alternatives1}.
Together, it follows that $F(R) = \{x\}$.    
  \end{step}
Neutrality implies that $F(R') = \{y\}$ for each $R'\in\fprof[\{x,y\}]$ with $R'(y,x) > R'(x,y)$.
It follows that $F\subseteq \ml$.
\begin{step}
  Third, we show that $F(R') = \ml(R')$ if $R'(x,y) = R'(y,x)$.
Let $\pref\in\mathcal L(\{x,y\})$ with $x\pref y$.
By the previous two steps, for each $\lambda\in(0,1]$, $F(\lambda\pref + (1-\lambda) R') = \{x\}$.
Thus, continuity implies that $x \in F(R')$.
By neutrality, $y \in F(R')$.
Thus, convex-valuedness implies that $F(R') = \Delta(\{x,y\})$.
\end{step}
\end{proof}

\begin{lemma}[\citealp{Bran13a}, Lemma 6]
	\label{lem:condorcet-consistency}
	Let $F$ be a PSCF that satisfies population-consistency and composition-consistency.
	Then, for each $A\in\fone(U)$, $F$ satisfies weak Condorcet-consistency in a neighborhood of the uniform profile $\uni(\mathcal L(A))$.
\end{lemma}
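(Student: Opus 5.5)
The plan is to show that the set $\mathcal C_x=\{R\in\fprof[A]\colon x\in F(R)\}$ is closed under rational convex combinations. I will then exhibit, for each $x\in A$, finitely many profiles in $\mathcal C_x$ whose rational convex hull contains every profile near $\uni(\mathcal L(A))$ that has $x$ as a weak Condorcet winner.

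\textbf{Step 1 (convexity).} Population-consistency gives closure of $\mathcal C_x$ under averaging with weight $\nicefrac12$. Iterating, $x\in F$ of any uniform average of $2^m$ profiles from $\mathcal C_x$, repetitions allowed. A rational combination $\sum_i \lambda_i R_i$ with $\lambda_i=q_i/n$ is first rewritten as a uniform average of $n$ profiles, $q_i$ of them equal to $R_i$. If $n$ is not a power of two, I pad the list to $2^m$ entries with copies of the target profile itself; this is legitimate only once the target is known to lie in $\mathcal C_x$. To avoid that circularity, I will instead take the target as the limit of the averages and use the continuity axiom, in the form that $x$ is chosen along a sequence $\nicefrac1k\,R'+(1-\nicefrac1k)R$. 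Checking that this continuity argument really closes the gap is a point that needs care.

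\textbf{Step 2 (building blocks via composition).} I take profiles $Q$ in which $B=A\setminus\{x\}$ is a component, so every voter ranks $x$ either above or below all of $B$. Fix any $b\in B$ and set $A'=\{x,b\}$. If at least half the weight ranks $x$ on top, then $x$ is a weak Condorcet winner of $Q|_{A'}$. By \Cref{lem:two-alternatives}, $x\in F(Q|_{A'})$, and composition-consistency gives $x=x\times_b q\in F(Q)$ for any $q\in F(Q|_B)$. The same works for components $\{x\}\cup C$ with $C\subsetneq A\setminus\{x\}$, giving $x\in F(Q)$ via $F(Q|_{\{x\}\cup C})$. Here I proceed by induction on $|A|$, since $x$ must be (near-uniformly) a weak Condorcet winner inside the smaller agenda. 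In particular, I will include the profiles obtained by placing $x$ immediately above or below an adjacent block. By neutrality, symmetrized versions of such blocks, obtained by averaging over all permutations of $A\setminus\{x\}$, are available and lie close to $\uni(\mathcal L(A))$.

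\textbf{Step 3 (covering a neighborhood), the main obstacle.} It remains to show that the rational cone generated by the Step~2 profiles, intersected with the simplex, contains $N\cap\{R\colon M_R(x,y)\ge 0\ \forall y\}$ for some neighborhood $N$ of the uniform profile. My plan is to work in the $(|A|!-1)$-dimensional affine space and show that the Step~2 profiles span it. Locally at $\uni(\mathcal L(A))$, the only binding constraints should then be the linear inequalities $M_R(x,y)\ge0$. For this I will exhibit, for each $y\ne x$, a block profile in which $M(x,y)$ is strictly positive, and block profiles in which the uniform profile lies on the boundary $M(x,y)=0$. The uniform profile itself is a uniform average of the $(x\text{ top})$ and $(x\text{ bottom})$ block profiles and lies in $\mathcal C_x$.

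The delicate point is that an arbitrary perturbation direction must be decomposable into directions realized within $\mathcal C_x$. For perturbations that leave all margins $M(x,\cdot)$ fixed, I expect this to require perturbing inside components, using the induction hypothesis. I would first check the case $|A|=3$ by explicit computation, and then organize the general case as an induction over nested components $\{x\}\subset C_1\subset\dots\subset A$.
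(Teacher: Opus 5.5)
The paper does not reprove this lemma; it imports it from \citet{Bran13a}. Judged on its own, your proposal has a genuine gap: Step~3, which carries the substance of the lemma, is only announced. Write nearby profiles as $\uni(\mathcal L(A))+\epsilon D$ with $\sum_{\pref}D(\pref)=0$. You must show that every $D$ with $M_D(x,y)\ge 0$ for all $y\neq x$ is a feasible direction for $\mathcal C_x$, with one $\epsilon$ working for all such $D$ of norm at most one. In particular, every $D$ in the subspace $K=\{D\colon \sum_{\pref}D(\pref)=0,\ M_D(x,y)=0 \text{ for all } y\neq x\}$, of dimension $|A|!-|A|$, must be feasible with both signs. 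That your building blocks affinely span the simplex is easy but insufficient. Two-sided feasibility of the kernel directions is precisely the ``delicate point'' you leave open.

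Moreover, your base points are wrong. Profiles in which $x$ is only ever ranked first or last (also after averaging over permutations of $A\setminus\{x\}$) put no mass on orders with $x$ in a middle position. So for $|A|\ge 3$ they neither average to $\uni(\mathcal L(A))$ nor lie close to it. Hence you have not even shown $\uni(\mathcal L(A))\in\mathcal C_x$, which is itself a nontrivial instance of the lemma; neutrality and convexity by themselves only guarantee that the uniform lottery lies in $F(\uni(\mathcal L(A)))$. Also, for a component $\{x\}\cup C$ you need not only $x\in F(Q|_{\{x\}\cup C})$. The block must also be chosen with probability one in the quotient profile, which is a second use of the induction hypothesis.

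Your strategy can be completed as follows. For $B\subseteq A$ with $2\le|B|\le|A|-1$, let $Q_B$ be uniform over the orders in which $B$ is contiguous, i.e., the composition of uniform profiles on the quotient and on $B$. Every order contains exactly $|A|-k+1$ contiguous $k$-sets, so the average of $Q_B$ over all $k$-subsets $B$ is $\uni(\mathcal L(A))$.

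Composition-consistency and induction on $|A|$ (base case \Cref{lem:two-alternatives}) give $x\in F(Q_B+\epsilon d)$ for every small $d$ supported on $B$-contiguous orders with $\sum_{\pref}d(\pref)=0$ and $M_d(x,\cdot)\ge 0$. The reason is that the quotient and the restriction to $B$ are then near-uniform, with $x$ (or the block containing $x$) as weak Condorcet winner. Taking $B=\{x,y\}$ and $d$ the difference of two orders that differ only by swapping adjacent $x$ and $y$ yields a direction raising only $M(x,y)$.

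What remains is the linear-algebra fact that $K$ is the sum of the subspaces $L_B$ of elements of $K$ supported on $B$-contiguous orders. Dually, let $s_y({\pref})=1$ if $x\pref y$ and $-1$ otherwise. A function on $\mathcal L(A)$ that on each set of $B$-contiguous orders has the form $c_B+\sum_{y\neq x}a_{B,y}s_y$ must have this form globally. For $|A|\ge 4$ this holds because, for $x\in B\subsetneq B'$, the coefficients are uniquely determined on the overlap, which links all $B\ni x$ via $\{x,y\}\subset\{x,y,z\}\supset\{x,z\}$. For $|A|=3$ it is a direct computation that also needs $B=\{y,z\}$.

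Finally, the continuity patch in Step~1 fails, because the continuity axiom requires $F$ to equal a singleton $\{p\}$ along the whole sequence. Instead, population-consistency of $f$ for electorates of different sizes directly gives the induced $F$ closure under all rational convex combinations.
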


\begin{lemma}\label{lem:rational-violation}
  Let $F$ be a PSCF that satisfies population-consistency and composition-consistency.
  If $F\not\subseteq \ml$, then there are $A\in\fone(U)$, $R\in\fprof[A]$, and $p\in F(R)\setminus\ml(R)$ with $p\in\Delta_{\mathbb Q}(A)$.
\end{lemma}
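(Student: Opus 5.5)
The plan is a perturbation argument. Let $p\in F(R)\setminus\ml(R)$ with $R\in\fprof[A]$, where $p$ may have irrational coordinates. The set of non-maximal lotteries $N_R=\{q\in\Delta(A)\colon (q^\mathsf{T}M_R)_y<0 \text{ for some } y\}$ is relatively open in $\Delta(A)$. So it is enough to find a profile $R^\ast$ that satisfies two conditions: $F(R^\ast)$ contains a set that is relatively open in $\Delta(A)$ and meets $N_{R^\ast}$, and $M_{R^\ast}$ is a positive multiple of $M_R$, so that $N_{R^\ast}=N_R$. Any such open set contains rational lotteries arbitrarily close to $p$, and these lie in $F(R^\ast)\setminus\ml(R^\ast)$.

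First, I would build a \emph{neutral} profile $Z\in\fprof[A]$ close to $\uni(\mathcal L(A))$ whose majority margins all vanish, for example by mixing $\uni(\mathcal L(A))$ with $\tfrac12 R+\tfrac12 R^{\mathrm{rev}}$, where $R^{\mathrm{rev}}$ reverses every preference. Every alternative is a weak Condorcet winner of $Z$. By \Cref{lem:condorcet-consistency} and convex-valuedness, this gives $F(Z)=\Delta(A)$, and in particular $p\in F(Z)$. Population-consistency applied repeatedly then yields $p\in F(R_\mu)$ for $R_\mu=\mu R+(1-\mu)Z$ and all dyadic $\mu\in(0,1]$. The margin matrix is $M_{R_\mu}=\mu M_R$, so $p\notin\ml(R_\mu)$ still holds. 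For $\mu$ small, $R_\mu$ lies in the neighborhood from \Cref{lem:condorcet-consistency}, and therefore every weak Condorcet winner of $R_\mu$ (equivalently, of $R$) also lies in $F(R_\mu)$.

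Second, I need to enlarge the set around $p$ to something full-dimensional. If $F(R_\mu)$ contains $p$ together with all vertices $x\in A$, then $\mathrm{conv}(\{p\}\cup A)$ contains a relative neighborhood of $p$ (assuming $p$ is in the relative interior, or after restricting to the face spanned by $\supp(p)$), and the argument is finished. To obtain extra vertices I would pick, for each $x\in A$, a profile $Z_x$ near uniform in which $x$ is a weak Condorcet winner of $\mu R+(1-\mu)Z_x$ and in which $p\in F(Z_x)$ still holds, and then mix again. If that fails, I would fall back on composition-consistency. Each alternative $x$ is replaced by a component $\{x,x'\}$ with tied internal restriction, so that $F$ on the component equals $\Delta(\{x,x'\})$ by \Cref{lem:two-alternatives}. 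This produces the whole family $p\times_x \Delta(\{x,x'\})$ inside $F$ of the cloned profile, which is a set with extra free directions. Since a cloned violation also remains a violation (the margins to outside alternatives are unchanged), it suffices to find a rational violation in the cloned profile.

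The main obstacle is this second step: producing enough points of $F(R^\ast)$ around $p$ to span a relatively open set while keeping $p\in F(R^\ast)$ and $M_{R^\ast}\propto M_R$. The difficulty is that population-consistency only lets me keep lotteries common to both summands, so any auxiliary profile must itself contain $p$. I would first try auxiliary profiles with zero margins, since there $F=\Delta(A)$ near uniform. Only if mixing with those cannot create the needed vertices would I turn to the cloning construction.
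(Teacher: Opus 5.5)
Your first step is sound (you could simply take $Z=\uni(\mathcal L(A))$). However, the argument rests entirely on the second step, and neither route you offer for it works.

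Population-consistency only transports lotteries already common to both summands, so nothing you derive forces $F(R^\ast)$ to be full-dimensional near $p$. The facts you establish are consistent with $F(R_\mu)=\mathrm{conv}(\{p\}\cup W)$, where $W$ is the set of weak Condorcet winners of $R$. When $W=\emptyset$ and $p$ is irrational, this set contains no rational lottery.

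The profiles $Z_x$ cannot be produced either. The only tool you invoke to certify $p\in F(Z_x)$ is \Cref{lem:condorcet-consistency}, and that needs every alternative in $\supp(p)$ to be a weak Condorcet winner of $Z_x$. If $\supp(p)=A$, this forces $M_{Z_x}=0$. Then $x$ is a weak Condorcet winner of $\mu R+(1-\mu)Z_x$ only if it already is one of $R$.

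The cloning fallback fails for a more basic reason. The family $p\times_x\Delta(\{x,x'\})$ merely splits $p_x$ inside the block $\{x,x'\}$, so every lottery in it has block sum $p_x$. If some $p_x$ is irrational, the family contains no rational lottery at all. Splitting mass can never repair irrationality.

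The missing idea is the reverse operation, merging mass, together with reshaping the violation so that a single free direction suffices. The paper's merging operation has three parts:
\begin{itemize}
\item It clones alternatives in integer proportions.
\item It averages the profile over all permutations of the clones; neutrality and population-consistency keep the lottery.
\item It collapses the resulting genuine component into one alternative by composition-consistency.
\end{itemize}
This pools probability onto a single alternative whose margins are weighted averages of the old ones.

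With this operation, the paper first pools almost all of $p$'s mass onto a fresh alternative $\bar c$ that is beaten by the alternative $c$ witnessing $p\notin\ml(R)$. It then mixes every other alternative with $\bar c$. As a result, $c$ becomes a Condorcet winner of the new profile $\tilde R$, while the lottery $\tilde p\in F(\tilde R)$ still has $\tilde p(c)<1$. A minimality argument, which merges each remaining support alternative into clones of $c$, reduces the support of $\tilde p$ to $\{c\}\cup B$ with equal weights on $B$.

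Only then is near-uniform weak Condorcet-consistency used. It puts the whole segment between $\tilde p$ and $c$ inside $F(\tilde R)$, so $\tilde p(c)$ can be chosen rational. The equal split on $B$ then makes $\tilde p$ rational, and $\tilde p\notin\ml(\tilde R)=\{c\}$. So rather than fattening $F(R^\ast)$ to full dimension, you need to reshape the violation until one free direction is enough.
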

\begin{proof}
	We start by proving the following claim.
	\begin{claim}\label{claim:rational-violation}
		There exist $A\in\fone(U)$, $c\in A$, $B\subseteq A\setminus\{c\}$, $\tilde R \in\fprof[A]$, and $\tilde p\in F(\tilde R)$ such that
  \begin{enumerate}[label=\textit{(\roman*)}]
    \item $c$ is a Condorcet winner for $\tilde R$, %
    \label{item:rational-violation1}
    \item $\tilde p(c) < 1$,\label{item:rational-violation2}
    \item $B\neq\emptyset$ and, for all $x,y\in B$, $\tilde p(x)=\tilde p(y)>0$. \label{item:rational-violation3}
  \end{enumerate}
	\end{claim}
 \begin{proof}[Proof of \Cref{claim:rational-violation}]
 	We first record a construction that will be used twice.
 	Let $D\in\fone(U)$, $S\in\fprof[D]$, $q\in F(S)$, $\hat a\in U$, $k\in\mathbb Z_+^D\setminus\{0\}$, and $\kappa\in\mathbb R_+$.
 	Let $x=\kappa k$ and assume that $x\le q$ and, if $\hat a\in D$, that $x(\hat a)=q(\hat a)$ and $k(\hat a)>0$.
 	Then, there are $\hat D=D\cup\{\hat a\}$, $\hat S\in\fprof[\hat D]$, and $\hat q\in F(\hat S)$ such that
 	\[
 	\hat q= q-x+\|x\|\,\hat a
 	\]
 	(where $q$ and $x$ assign probability $0$ to $\hat a$ if $\hat a\notin D$), and, for each $z\in D\setminus\{\hat a\}$,
 	\begin{align}
 	  M_{\hat S}(\hat a,z)=\sum_{a\in D}\frac{k(a)}{\|k\|}M_S(a,z). \label{eq:rational-violation-construction}
 	\end{align}
 	Moreover, all majority margins not involving $\hat a$ remain unchanged.
 	To see this, choose pairwise disjoint sets $C_a$ of clones of $a$ with $|C_a|=k(a)$ for all $a\in D$ and such that $\hat a\in\bigcup_{a\in D}C_a$; if $\hat a\in D$, choose $\hat a\in C_{\hat a}$.
 	Let $C=\bigcup_{a\in D}C_a$.
 	Replace each $a\in D$ by the component $\{a\}\cup C_a$ if $a\neq\hat a$ and by the component $C_{\hat a}$ if $a=\hat a$.
 	By \Cref{lem:two-alternatives} and composition-consistency, the internal profiles can be chosen so that repeated applications of composition-consistency yield a profile $S^+$ and a lottery $q^+\in F(S^+)$ with $q^+(z)=q(z)-x(z)$ for all $z\in D\setminus\{\hat a\}$ and $q^+(z)=\kappa$ for all $z\in C$; moreover, the average majority margin of the alternatives in $C_a$ against $a$ is $0$ for each $a\in D\setminus\{\hat a\}$.
 	Averaging $S^+$ over all permutations of $C$ preserves $q^+$ by neutrality and population-consistency.
 	In the averaged profile, all alternatives in $C$ are clones, and each element of $C$ has, against each $z\in D\setminus\{\hat a\}$, the average majority margin in \eqref{eq:rational-violation-construction}.
 	Blowing down $C$ to $\hat a$ and applying composition-consistency gives the desired profile $\hat S$ and lottery $\hat q$.

 	Now let $A\in\fone(U)$, $R\in\fprof[A]$, and $p\in F(R)\setminus\ml(R)$.
 	Then, there is $c\in A$ such that
 	\[
 	  \gamma=\sum_{a\in A}p(a)M_R(c,a)>0.
 	\]
 	In particular, $p(c)<1$.
 	Choose $L\in\mathbb N$ such that $L\gamma>2$ and then choose $\eta>0$ such that $L\eta<1-\eta$.
 	Choose $k\in\mathbb Z_+^A\setminus\{0\}$ and $\kappa\in\mathbb R_+$ such that, for $x=\kappa k$,
 	\[
 	  x\le p,\qquad \|x\|\ge 1-\eta,\qquad\text{and}\qquad
 	  \sum_{a\in A}\frac{k(a)}{\|k\|}M_R(c,a)>\frac\gamma2.
 	\]
 	This is possible by choosing $k/\|k\|$ with support contained in $\supp(p)$ sufficiently close to $p$ and then scaling it down slightly so that $x\le p$.

 	Let $\bar c\in U\setminus A$.
 	Apply the construction with $D=A$, $S=R$, $q=p$, and $\hat a=\bar c$.
 	We obtain a profile $\bar R\in\fprof[A\cup\{\bar c\}]$ and a lottery $\bar p\in F(\bar R)$ such that
 	\[
 	  \bar p=p-x+\|x\|\,\bar c
 	  \quad\text{and}\quad
 	  M_{\bar R}(c,\bar c)>\frac\gamma2.
 	\]
 	In particular, $\bar p(A)\le\eta$ and $\bar p(\bar c)\ge 1-\eta$.

 	We next replace each alternative $a\in A\setminus\{c\}$ by the mixture of $a$ and $\bar c$ that puts relative weights $1$ and $L$ on these two alternatives.
 	This can be done by sequentially applying the construction with $\hat a=a$ and $k^a\in\mathbb Z_+^{A\cup\{\bar c\}}$ given by $k^a(a)=1$, $k^a(\bar c)=L$, and $k^a(z)=0$ otherwise.
 	Indeed, before the step for $a$, set $x^a=\bar p(a)k^a$.
 	The total amount subtracted from $\bar c$ in all these steps is at most $L\bar p(A\setminus\{c\})\le L\eta<1-\eta\le \bar p(\bar c)$, so the construction is applicable at every step.
 	Let $\tilde R$ and $\tilde p\in F(\tilde R)$ be the resulting profile and lottery on $A\cup\{\bar c\}$.
 	The majority margin of $c$ over $\bar c$ is unchanged and therefore positive.
 	Moreover, for each $a\in A\setminus\{c\}$,
 	\[
 	  M_{\tilde R}(c,a)
 	  =\frac{M_{\bar R}(c,a)+L M_{\bar R}(c,\bar c)}{1+L}
 	  \ge \frac{-1+L\gamma/2}{1+L}>0.
 	\]
 	Thus, $c$ is a Condorcet winner for $\tilde R$.
 	Finally, $\tilde p(c)=\bar p(c)=p(c)-x(c)<1$, and $\tilde p(\bar c)>0$ by the choice of $\eta$.
 	Hence, the claim follows by setting $B=\{\bar c\}$.
 \end{proof}

  Let $A\in\fone(U)$, $c\in A$, $B\subseteq A\setminus\{c\}$, $\tilde R \in\fprof[A]$, and $\tilde p\in F(\tilde R)$ such that \ref{item:rational-violation1}--\ref{item:rational-violation3} hold, which exist by \Cref{claim:rational-violation}.  
		Assume that $|\{a\in A\setminus (B\cup\{c\})\colon \tilde p(a)>0\}|$ is minimal among all possible choices of $A$, $c$, $B$, $\tilde R$, and $\tilde p$. 
	Let $A^+=\{a\in A\setminus(B\cup\{c\})\colon \tilde p(a)>0\}$.
	Assume for contradiction that $A^+$ is nonempty.
	
  Neutrality and convex-valuedness imply that $F(\uni(\mathcal L(A))) = \Delta(A)$.
  Hence, by population-consistency, $\tilde p \in F(\lambda \tilde R + (1-\lambda)\uni(\mathcal L(A)))$ for each $\lambda\in[0,1]$.
  We may thus assume that $\tilde R$ is as close to $\uni(\mathcal L(A))$ as we wish, and so we may assume that $F$ is weakly Condorcet-consistent at $\tilde R$ by \Cref{lem:condorcet-consistency}.
  By the same argument, we will assume that $F$ is weakly Condorcet-consistent at other profiles considered later in the proof without repeating the argument.
  
  Let $\delta = \min\{\tilde R(c,a)\colon a \in A\setminus\{c\}\} - \nicefrac12$, and note that $\delta > 0$ since $c$ is a weak Condorcet winner.
  Let $a\in A^+$.
  Since $F$ is weakly Condorcet-consistent at $\tilde R$ and convex-valued, we have that $\lambda c + (1-\lambda)\tilde p \in F(\tilde R)$ for each $\lambda\in[0,1]$.
  Hence, we may assume that $\tilde p(c)/\tilde p(a) = k\in\mathbb N$, where $k > 2\delta^{-1}$. 
  Let $C\in \fone(U)$ with $C\cap A = \{c\}$ and $|C| = k$, and let $\tilde R^+\in\fprof[A\cup C]$ be the profile resulting from $\tilde R$ by replacing $c$ with $\uni(\mathcal L(C))$, so that $\tilde R^+|_{A} = \tilde R$, $\tilde R^+|_{C} = \uni(\mathcal L(C))$, and $C$ is a component in $\tilde R^+$.
  By composition-consistency, $\tilde p^+\in F(\tilde R^+)$, where $\tilde p^+ = \tilde p \times_c \uni(C)$.
  Note that $\tilde p^+(a) = \tilde p^+(c')$ for each $c'\in C$ by the choice of $k$.

  Let $\Pi\subseteq \Pi(A\cup C)$ contain all permutations $\pi$ of $A\cup C$ with $\pi(C\cup \{a\}) = C\cup\{a\}$.
  Let
  \begin{align}
    \hat R^+ = |\Pi|^{-1} \sum_{\pi \in \Pi} \pi(\tilde R^+).
  \end{align}
  By neutrality and population-consistency, $\tilde p^+ \in F(\hat R^+)$.
  Note that all alternatives in $C\cup\{a\}$ are clones in $\hat R^+$, and each of them strictly dominates each alternative in $A\setminus\{a,c\}$ with a margin of at least $\nicefrac\delta2$ by the choice of $k$.
  Let $\hat R = \hat R^+|_{A\setminus\{a\}}$ be the profile resulting from $\hat R^+$ by replacing the clones in $C\cup\{a\}$ by $c$, and let $\hat p\in \Delta(A\setminus\{a\})$ such that $\hat p(c)=\tilde p(c)+\tilde p(a)$ and $\hat p|_{A\setminus\{a,c\}} = \tilde p|_{A\setminus\{a,c\}}$.
  By composition-consistency, $\hat p \in F(\hat R)$.
  Note that $c$ is a Condorcet winner for $\hat R$, $\hat p(c) < 1$, and for all $b,b'\in B$, $\hat p(b) = \hat p(b')>0$.
  Hence, $A\setminus\{a\}$, $c$, $B$, $\hat R$, and $\hat p$ satisfy \ref{item:rational-violation1}--\ref{item:rational-violation3} in \Cref{claim:rational-violation}.
  This contradicts the minimality of $|A^+|$.

  Thus, $\tilde p(a)=0$ for all $a\in A\setminus (B\cup\{c\})$.
  Since $c$ is a Condorcet winner and $F$ satisfies weak Condorcet-consistency at $\tilde R$ and convex-valuedness, we may assume that $\tilde p(c) \in (0,1)\cap\mathbb Q$.
  Since $B\neq\emptyset$ and $\tilde p(b) = \tilde p(b')$ for all $b,b'\in B$, it follows that $\tilde p \in \Delta_{\mathbb Q}(A)$.
  The fact that $\tilde p \not\in \ml(\tilde R) = \{c\}$ completes the proof.
\end{proof}

\begin{lemma}\label{lem:f-subseteq-ml}
  Let $F$ be a PSCF that satisfies population-consistency and composition-consistency.
  Then, $F\subseteq\ml$.
\end{lemma}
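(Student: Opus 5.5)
The plan is a proof by contradiction that reuses the cloning-and-averaging machinery from the proof of \Cref{lem:rational-violation}. Suppose $F\not\subseteq\ml$. By \Cref{lem:rational-violation}, there are $A\in\fone(U)$, $R\in\fprof[A]$, and $p\in F(R)\setminus\ml(R)$ with $p\in\Delta_{\mathbb Q}(A)$. Since $p\notin\ml(R)$, some $c\in A$ satisfies $\sum_{a\in A}p(a)M_R(c,a)>0$.

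Mixing $R$ with $\uni(\mathcal L(A))$ keeps $p\in F(\cdot)$: neutrality and convex-valuedness give $F(\uni(\mathcal L(A)))=\Delta(A)$, and then we apply population-consistency. The strict inequality is preserved because all margins merely scale. So, as in the proof of \Cref{lem:rational-violation}, we may assume $R$ lies in the neighborhood of the uniform profile where \Cref{lem:condorcet-consistency} gives weak Condorcet-consistency.

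The first step uses rationality of $p$ to make the chosen lottery uniform on its support. Write $p(a)=n_a/N$ with $n_a\in\mathbb N$. Replace every $a\in\supp(p)$ by a component $C_a$ of $n_a$ clones whose internal profile is uniform. By \Cref{lem:two-alternatives} and neutrality, $F$ returns all of $\Delta(C_a)$ there, so composition-consistency yields a profile $R^+$ on which the lottery $q$, uniform over $C=\bigcup_a C_a$, is chosen. If $c\in\supp(p)$, pick one clone of $c$ as the distinguished alternative $c$ and move its mass into the block, in the same way as the construction in \Cref{claim:rational-violation}. Now $\sum_{z\in C}M_{R^+}(c,z)>0$.

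The second step averages over permutations. Let $\Pi$ be the permutations of the agenda that fix $c$, map $C$ to itself, and map the zero-probability alternatives $Z$ to themselves. Set $\hat R=|\Pi|^{-1}\sum_{\pi\in\Pi}\pi(R^+)$. Neutrality and population-consistency give $q\in F(\hat R)$, since $q$ is invariant under $\Pi$. In $\hat R$, the set $C$ is a component of mutually symmetric alternatives, and $c$ has a strictly positive margin against each of them. Collapsing $C$ to a single alternative $s$ via composition-consistency yields a profile $\hat R'$ and a lottery in $F(\hat R')$ that puts probability $1$ on $s$, even though $M_{\hat R'}(c,s)>0$.

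The final step, which I expect to be the main obstacle, is to turn this into a contradiction when $Z\neq\emptyset$. The alternatives in $Z$ are symmetric among themselves but need not form a component together with $c$ or $s$, so we cannot simply restrict to $\{c,s\}$. I would first try to eliminate $Z$ in the same manner as the minimality argument of \Cref{lem:rational-violation}. Using weak Condorcet-consistency near the uniform profile and convex-valuedness, I would shift mass and merge alternatives of $Z$ with clones until only $c$ and $s$ remain, while keeping $M(c,s)>0$. Once only $c$ and $s$ remain, \Cref{lem:two-alternatives} gives $F=\{c\}$ on that profile, which contradicts the chosen lottery being concentrated on $s$.

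If the merging does not directly preserve the margin, the fallback is the construction in the proof of \Cref{claim:rational-violation}, which mixes each $z\in Z$ with heavy weight on $s$. This makes $c$ a Condorcet winner over everything, while the chosen lottery keeps $c$ at probability strictly below $1$ and is uniform on the rest. That reduces to the terminal configuration in the proof of \Cref{lem:rational-violation}, where $\ml(\tilde R)=\{c\}$. The remaining task there is to contradict the lottery itself, not just its non-maximality. For this I would collapse via composition-consistency to the two-alternative agenda $\{c,s\}$ and invoke \Cref{lem:two-alternatives}.
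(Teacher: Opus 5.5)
There is a genuine gap, and it lies exactly where the paper's proof does its real work. The paper does not try to refute the rational violation from \Cref{lem:rational-violation} by further cloning. It hands that violation to Lemmas~6--13 of \citet{Bran13a}, with \Cref{lem:two-alternatives} standing in for their two-alternative result. Your replacement for that input stops at the case $Z\neq\emptyset$, which you yourself identify as the main obstacle.

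The minimality argument in the proof of \Cref{lem:rational-violation} does not help there. It absorbs an alternative of \emph{positive} probability into a block of equally weighted clones of a Condorcet winner, whose mass is first tuned to match. In your configuration, $c$ and every element of $Z$ have probability $0$, and $c$ need not be a weak Condorcet winner, so there is no mass to redistribute. Your fallback only returns you to the configuration that \Cref{lem:rational-violation} already delivers: a Condorcet winner $c$ and a chosen lottery giving $c$ probability below $1$. That configuration is not contradictory with the tools at hand, because weak Condorcet-consistency near the uniform profile gives $c\in F(\tilde R)$, not $F(\tilde R)=\{c\}$. Ruling it out is precisely what the cited lemmas are needed for. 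Your closing move, collapsing to $\{c,s\}$ and applying \Cref{lem:two-alternatives}, is the restriction you correctly observed one paragraph earlier to be unavailable, since $Z$ forms no component with $c$ or $s$.

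The collapse in your second step has the same defect, even when $Z=\emptyset$. Composition-consistency applies only to a genuine component, i.e., a set that is adjacent in every ranking in the support. Averaging over permutations that map $C$ to itself equalizes the margins of the elements of $C$ against outside alternatives, but it cannot create adjacency. Indeed, $C$ is a component of $\hat R$ if and only if it is a component of $R^+$, and for $Z=\emptyset$ this would require every ranking in the support to place $c$ first or last. Collapsing a set that is merely symmetric in the majority margins presupposes that $F$ depends on $R$ only through $M_R$, which is not available before $F=\ml$ has been established. If such collapses were permitted, the lemma would take a few lines: mix each $z\in Z$ heavily with $c$ so that it beats $s$, symmetrize over $\{c\}\cup Z$, and collapse to a two-alternative profile in which $s$ is chosen with probability $1$ although it is beaten. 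So the entire difficulty of excluding a rational violation is concentrated in the step your argument takes for granted.
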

\begin{proof}
	Assume for contradiction that $F\not\subseteq \ml$.
  By \Cref{lem:rational-violation}, there are $A\in\fone(U)$, $R\in\fprof[A]$, and $p\in F(R)\setminus\ml(R)$ with $p\in\Delta_{\mathbb Q}(A)$.
  However, \Cref{lem:two-alternatives} together with Lemmas~6 to~13 of \citet{Bran13a} rule out the existence of such a rational violation, a contradiction. Therefore, $F\subseteq \ml$.
\end{proof}

\subsection{$\ml\subseteq F$}

We recall Lemma~14 of \citet{Bran13a}.
\begin{lemma}\label{lem:mcgarvey}
	Let $M\in\mathbb{Q}^{n\times n}$ be a skew-symmetric matrix. 
  Then, there are $R\in\fprof[{[n]}]$ and $c\in \mathbb{Q}_{>0}$ such that $c M = M_R$.
	Furthermore, if there is $\pi\in\Pi([n])$ such that $M(i,j) = M(\pi(i),\pi(j))$ for all $i,j\in[n]$, then $R = \pi(R)$.
\end{lemma}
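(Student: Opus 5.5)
The plan is to use a McGarvey-type construction. For each ordered pair of distinct $i,j\in[n]$ I will build a pair of preference relations whose combined majority margins are $2$ on the entry $(i,j)$, $-2$ on the entry $(j,i)$, and $0$ everywhere else. I then take a positive rational combination of these pairs with weights given by the positive entries of $M$, normalize it to a fractional profile, and finally symmetrize over the automorphisms of $M$ to obtain the invariance statement.

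\emph{Step 1 (elementary gadgets).} Fix a linear order $\tau$ on $[n]$. For $i\neq j$, let ${\pref}_{ij}$ rank $i$ first, $j$ second, and the remaining alternatives below them in the order $\tau$. Let ${\pref}'_{ij}$ rank the remaining alternatives first in the reverse of $\tau$, followed by $i$ and then $j$. I then check the majority margins of the two-voter profile ${\pref}_{ij}+{\pref}'_{ij}$:
\begin{itemize}
\item[(a)] $i$ beats $j$ in both relations, so the margin of $i$ over $j$ is $2$.
\item[(b)] For $k\notin\{i,j\}$, both $i$ and $j$ are above $k$ in ${\pref}_{ij}$ and below $k$ in ${\pref}'_{ij}$, so these comparisons cancel.
\item[(c)] Two alternatives $k,l\notin\{i,j\}$ appear in opposite orders in the two relations, so they cancel as well.
\end{itemize}
Hence $M_{{\pref}_{ij}+{\pref}'_{ij}}=2(e_ie_j^\mathsf{T}-e_je_i^\mathsf{T})$.

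\emph{Step 2 (combination and normalization).} If $M=0$, I take $R=\uni(\mathcal L([n]))$, whose margins vanish. (Since then $cM=0$ for every $c$, any $c\in\mathbb Q_{>0}$ works.) Otherwise, let $P=\{(i,j)\colon M(i,j)>0\}$ and $Z=\sum_{(i,j)\in P}M(i,j)\in\mathbb Q_{>0}$. Define
\[R=\frac{1}{2Z}\sum_{(i,j)\in P}M(i,j)\,({\pref}_{ij}+{\pref}'_{ij}).\]
This is a nonnegative rational vector summing to $1$, so $R\in\fprof[{[n]}]$. Majority margins are linear in the profile, and $M$ is skew-symmetric, so each unordered pair is covered exactly once by $P$ with the correct sign. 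Step 1 then gives $M_R=\frac{1}{Z}M$, so $c=1/Z$ works.

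\emph{Step 3 (symmetry).} I will symmetrize over the group $G=\{\sigma\in\Pi([n])\colon M(\sigma(i),\sigma(j))=M(i,j)\ \forall i,j\}$ and set
\[\bar R=\frac{1}{|G|}\sum_{\sigma\in G}\sigma(R).\]
Relabeling the alternatives permutes the margins, that is, $M_{\sigma(R)}(a,b)=M_R(\sigma^{-1}(a),\sigma^{-1}(b))$. This equals $cM(\sigma^{-1}(a),\sigma^{-1}(b))=cM(a,b)$ because $\sigma^{-1}\in G$. By linearity, $M_{\bar R}=cM$. Moreover, $\bar R$ is a rational fractional profile, and $\pi(\bar R)=\bar R$ for every $\pi\in G$, since left multiplication by $\pi$ permutes $G$. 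Since any $\pi$ as in the statement lies in $G$, the single profile $\bar R$ satisfies both claims simultaneously.

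I do not expect a real obstacle. The only points needing care are the bookkeeping in Step 1, where the reversed tail is exactly what makes all other margins cancel, and checking that the relabeling convention in Step 3 is compatible with how $\pi(R)$ acts on profiles.
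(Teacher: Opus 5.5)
Your proof is correct and complete. The gadget ${\pref}_{ij}+{\pref}'_{ij}$ has margin matrix $2(e_ie_j^\mathsf{T}-e_je_i^\mathsf{T})$, the weighted combination over $P$ gives a rational fractional profile with $M_R=\frac{1}{Z}M$, and averaging over the automorphism group $G$ of $M$ keeps the margins while producing one profile invariant under every admissible $\pi$. The paper does not reprove this lemma; it imports it as Lemma~14 of \citet{Bran13a}, and your argument is the standard McGarvey-type construction the label refers to, so there is nothing to add.
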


\begin{lemma}\label{lem:ml-subseteq-f}
  Let $F$ be a PSCF that satisfies population-consistency and composition-consistency.
  Then, $\ml\subseteq F$.
\end{lemma}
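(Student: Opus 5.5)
Given $A\in\fone(U)$, $R\in\fprof[A]$ and $p\in\ml(R)$, we have to show $p\in F(R)$. By \Cref{lem:f-subseteq-ml}, $\emptyset\neq F(R)\subseteq\ml(R)$, so it suffices to produce, for each $p$ in a set whose convex hull is $\ml(R)$, a profile $R'$ with $F(\nicefrac1k R' + (1-\nicefrac1k)R)=\{p\}$ for all $k$; continuity then gives $p\in F(R)$, and convexity of $F(R)$ gives all of $\ml(R)$.

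\textbf{Reduction.} $\ml(R)$ is a polytope, the solution set of a linear system with rational coefficients, so it is the convex hull of its vertices. Each vertex $p$ is rational and is the unique solution of $p^\mathsf{T}M_R\ge 0$ with some inequalities tight. It therefore suffices to treat a vertex $p$ with support $S=\supp(p)$.

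\textbf{Making $p$ unique.} I would construct a rational skew-symmetric matrix $N$ on $A$ such that $p$ is the \emph{unique} maximal lottery of $M_R+\epsilon N$ for every $\epsilon\in(0,1]$.

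First suppose $|S|$ is odd. On $S$, let $N$ be a rotational (cyclic) tournament, scaled so that $p|_S$ is a completely mixed equilibrium of $N|_S$. Concretely, with $S=\{s_1,\dots,s_m\}$ in cyclic order, set $N(s_i,s_{i+j})=w_{ij}$ for $j=1,\dots,(m-1)/2$, with rational weights chosen so that $p^\mathsf{T}N|_S=0$. Such a matrix has full rank $m-1$, so $p|_S$ is its unique maximal lottery. For $z\notin S$, set $N(s,z)=1$ for all $s\in S$, so that $p^\mathsf{T}(M_R+\epsilon N)(\cdot,z)>0$ strictly.

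Then $p^\mathsf{T}(M_R+\epsilon N)\ge 0$, with equality exactly on $S$. Uniqueness follows from the standard fact that a symmetric zero-sum game whose equilibrium has full support on $S$, strict inequalities off $S$, and a nonsingular relevant submatrix has a unique equilibrium. Here I use that $M_R|_S$ satisfies $p|_S^\mathsf{T} M_R|_S=0$, since the inequalities are tight on $S$ by the complementarity properties of vertices. I then need $(M_R+\epsilon N)|_S$ to have a one-dimensional kernel, which holds for all but finitely many $\epsilon$. Shrinking to $\epsilon=\epsilon_0/k$ still leaves cofinitely many good values, and I can rescale to handle the rest.

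If $|S|$ is even, I would first clone some $s\in S$ into two copies $s,s'$ splitting $p(s)$, using \Cref{lem:two-alternatives} and composition-consistency. This gives a profile $R^c$ on $A\cup\{s'\}$ with a maximal lottery of odd support. Composition-consistency transfers membership in $F$ back to $R$, since $F(R^c)=F(R)\times_s F(R^c|_{\{s,s'\}})$ and the internal tie yields $\Delta(\{s,s'\})$.

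\textbf{Realizing the perturbation.} By \Cref{lem:mcgarvey}, there are $R_N$ and $c>0$ with $M_{R_N}=cN$. Set $R'=R_N$. Then $\nicefrac1k R'+(1-\nicefrac1k)R$ has majority margin matrix proportional to $M_R+\epsilon_k N$ with $\epsilon_k\to 0$, so its maximal lottery set is $\{p\}$. Since $\emptyset\neq F\subseteq\ml$, we get $F(\cdot)=\{p\}$, and continuity yields $p\in F(R)$.

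\textbf{Main obstacle.} The hardest step is guaranteeing uniqueness for \emph{every} $k$, not merely for generic $\epsilon$. Degeneracy of $M_R|_S$ could make $\det$ of the bordered system vanish at isolated $\epsilon$. I would handle this by choosing $N$ so that the relevant determinant, a polynomial in $\epsilon$ with nonzero leading coefficient from the cyclic part, has no root in $(0,\epsilon_0]$, shrinking $\epsilon_0$ as needed.
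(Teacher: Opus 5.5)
\textbf{Gap: the rank-$(m-1)$ claim for your rotational tournament.} Your overall architecture is the paper's, but the step that carries the whole argument is asserted without proof and, as stated, is false. The parallels are: reduce to a rational $p$; clone one support element to make the support odd; realize a perturbation $N$ through the McGarvey-type lemma so that $p$ becomes quasi-strict; note that the relevant $(m-1)\times(m-1)$ minor on $S$ is a polynomial in $\epsilon$ with finitely many roots; and apply continuity along a sequence that stays below the smallest positive root.

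You claim that a rotational tournament on $S$ with rational weights satisfying $p^\mathsf{T}N|_S=0$ ``has full rank $m-1$''. Here is a counterexample. Take $m=5$, uniform $p$, and $N(s_i,s_j)=u_iv_j-u_jv_i$ with $(u_i,v_i)$ equal to $(1,0)$, $(\frac{3}{10},1)$, $(-\frac45,\frac35)$, $(-\frac45,-\frac35)$, $(\frac{3}{10},-1)$.
\begin{itemize}
\item These points sum to zero, so $\mathbf 1^\mathsf{T}N|_S=0$.
\item They are in cyclic angular order, and each makes an angle less than $\pi$ with the next two. Hence $N|_S$ is a rotational tournament with positive rational weights.
\item Yet $N|_S=uv^\mathsf{T}-vu^\mathsf{T}$ has rank $2$.
\end{itemize}
So $p|_S$ is not its unique maximal lottery. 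Worse, the leading coefficient $\det(N|_{S\setminus\{s_5\}})$ of your determinant polynomial is $0$. Nothing then guarantees that $\det((M_R+\epsilon N)|_{S\setminus\{s_5\}})$ is a nonzero polynomial; it vanishes identically whenever $M_R|_S=0$. The ``all but finitely many $\epsilon$'' step, and with it uniqueness, collapses.

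\textbf{The fix, as in the paper.} What is missing is a specific choice of $N|_S$ together with an explicit nonvanishing minor. Use only the weighted $m$-cycle: $N(s_{i+1},s_i)=1/(p_{s_i}p_{s_{i+1}})$ with indices mod $m$, and all other entries on $S$ zero.
\begin{itemize}
\item Then $(p^\mathsf{T}N)_{s_j}=1/p_{s_j}-1/p_{s_j}=0$.
\item The principal submatrix on $S\setminus\{s_m\}$ is a weighted path on an even number of vertices. Its determinant is the square of its Pfaffian (the product of alternate edge weights), namely $\prod_{i<m}p_{s_i}^{-2}\neq 0$.
\item This gives rank $m-1$ for $N|_S$ and a nonzero leading coefficient at once.
\end{itemize}

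\textbf{A second, smaller issue: your choice of $R'$.} Your ``Set $R'=R_N$'' does not yield $F=\{p\}$ for \emph{every} $k$. For $k\ge 2$, the mixtures correspond to parameters $c/(k-1)$, and these can hit a root. Rescaling $N$ does not help, because the constant in the McGarvey-type lemma absorbs the scale. You must take $R'=(1-\lambda)R+\lambda R_N$ with $\lambda$ small enough that every $\frac1kR'+(1-\frac1k)R$ lies below the smallest positive root. This is exactly the paper's $R'=R^{1/\ell_0}$.
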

\begin{proof}
	By \Cref{lem:f-subseteq-ml}, $F\subseteq \ml$.
  Let $A\in\fone(U)$, $R\in\fprof[A]$, and $p\in\ml(R)$.
	Because the vertices of the polytope $\ml(R)$ are in $\Delta_{\mathbb Q}(A)$ and $F(R)$ is convex, we may assume that $p \in \Delta_{\mathbb Q}(A)$.
  By composition-consistency, we may assume without loss of generality that $A = [n]$ and $\supp(p) = [k]$.
  Moreover, we may assume that $k$ is odd: if $k$ is even, introduce a clone of $k$ and split the probability on $k$ equally between $k$ and its clone; the equal split ensures that $p \in \Delta_{\mathbb Q}(A)$.\footnote{The original proof by \citet{Bran13a} treats the case of even $k$ separately rather than using this simple reduction argument.}
  For simplicity, let $M = M_R$.
	It remains to show that $p \in F(R)$. 
	
	Consider the case that $k = 1$, i.e., there is $x \in A$ such that $p$ is the degenerate lottery with probability $1$ on $x$.
	$\ml$ only returns the degenerate lottery with probability $1$ on $x$ if $x$ is a weak Condorcet winner.
	Therefore, $x$ is a weak Condorcet winner for $R$.
	Let $R'$ be a fractional preference profile for which $x$ is a Condorcet winner, and for $\ell \in \mathbb N$, let $R_\ell = \nicefrac1\ell\, R' + \nicefrac{(\ell-1)}{\ell}\, R$.
	Then $x$ is a Condorcet winner for $R_\ell$ for each $\ell\in\mathbb N$.
	Therefore, by \Cref{lem:f-subseteq-ml}, $\emptyset\neq F(R_\ell) \subseteq \ml(R_\ell) = \{p\}$ for each $\ell\in \mathbb N$, and continuity of $F$ implies that $p \in F(R)$.

	Now assume $k \ge 3$.
	By \Cref{lem:mcgarvey}, there are $S\in\fprof[A]$ and $c\in\mathbb{Q}_{>0}$ such that
	\[
	M_S = c
		\left(
	      	\begin{array}{cccccc:ccc}
				0 & -\frac{1}{p_1p_2} & 0 & \dots & 0 & \frac{1}{p_kp_1} & 1 & \dots & 1\\
				\frac{1}{p_1p_2} & & & \multicolumn{2}{c}{\multirow{2}{*}{$\ddots$}} & 0 & \multirow{4}{*}{\vdots} & \multirow{4}{*}{$\ddots$} & \multirow{4}{*}{\vdots}\\
				0 & & \multicolumn{2}{c}{\multirow{2}{*}{$\ddots$}} & & \vdots\\
				\vdots & \multicolumn{2}{c}{\multirow{2}{*}{$\ddots$}} & & & 0\\
				0 & & & & & -\frac{1}{p_{k-1}p_{k}}\\
				-\frac{1}{p_{k}p_{1}} & 0 & \dots & 0 & \frac{1}{p_{k-1}p_k} & 0 & 1 & \dots & 1\\
				\hdashline
				-1 & \multicolumn{4}{c}{\dots} & -1 & 0 & \dots & 0\\
				\vdots & \multicolumn{4}{c}{\ddots} & \vdots & \vdots & \ddots & \vdots\\
				-1 & \multicolumn{4}{c}{\dots} & -1 & 0 & \dots & 0\\
	       	\end{array}
		\right)
	\]
	Intuitively, $M_S$ defines a weighted cycle on $[k]$.
	Note that $(p^T M_S)_i = 0$ for all $i\in\supp(p)$ and $(p^T M_S)_i > 0$ for all $i\in A\setminus\supp(p)$, i.e., $p$ is a quasi-strict maximin strategy in $M_S$ in the sense of \citet{Hars73b}.
	Since $p$ is a maximin strategy in $M_S$, it follows that $p\in\ml(S)$.
	For $\epsilon\in[0,1]$, we define $R^\epsilon = (1-\epsilon) R + \epsilon S$ and $M^\epsilon = M_{R^\epsilon}$.
	Population-consistency of $\ml$ implies that $p\in\ml(R^\epsilon)$ for all $\epsilon\in[0,1]$.
	Observe that $p$ is a quasi-strict maximin strategy in $M^{\epsilon}$ for every $\epsilon\in(0,1]$.
	Hence, for every maximin strategy $q$ in $M^{\epsilon}$, it follows that $(q^T M^{\epsilon})_i = 0$ for every $i\in[k]$ and $q_i = 0$ for every $i\not\in [k]$.
	It follows that
	\[\det\left(\left(M_S(i,j)\right)_{i,j\in[k-1]}\right) = c^{k-1}\prod_{i = 1}^{k-1} \frac{1}{p_i^2}\neq 0\text,\]
	and hence, $(M_S(i,j))_{i,j\in[k]}$ has rank at least $k-1$.
	In fact, $(M_S(i,j))_{i,j\in[k]}$ has rank $k-1$, since skew-symmetric matrices of odd size cannot have full rank.\footnote{\label{note:1}A skew-symmetric matrix $M$ of odd size cannot have full rank, since $\det(M) = \det(M^T) = \det(-M) = (-1)^n\det(M) = -\det(M)$ and, hence, $\det(M) = 0$.}
	Furthermore, $\det((M^{\epsilon}(i,j))_{i,j\in[k-1]})$ is a nonzero polynomial in $\epsilon$ of degree at most $k-1$ and hence, has at most $k-1$ zeros.
	Thus, $(M^{\nicefrac1\ell}(i,j))_{i,j\in[k]}$ has rank $k-1$ for all but finitely many $\ell\in\mathbb{N}$.
	In particular, for all but finitely many $\ell\in\mathbb N$, $(q^TM^{\nicefrac1\ell})_i = 0$ for all $i\in[k]$ implies that $q = p$.
	This implies that $p$ is the unique maximin strategy in $M^{\nicefrac1\ell}$ for all but finitely many $\ell\in\mathbb{N}$ and hence, $\emptyset\neq F(R^{\nicefrac1\ell}) \subseteq \ml(R^{\nicefrac1\ell}) = \{p\}$ for all but finitely many $\ell\in\mathbb{N}$.
	Choose $\ell_0\in\mathbb N$ such that $F(R^{\nicefrac1\ell}) = \{p\}$ for all $\ell\ge\ell_0$, and let $R' = R^{\nicefrac1{\ell_0}}$.
	For each $h\in\mathbb N$, $\nicefrac1h\, R' + (1-\nicefrac1h)\, R = R^{\nicefrac1{h\ell_0}}$, so $F(\nicefrac1h\, R' + (1-\nicefrac1h)\, R) = \{p\}$, and continuity implies that $p\in F(R)$.
\end{proof}

\begin{proof}[Proof of \Cref{thm:main}]
	Let $f$ be a pSCF that satisfies population-consistency and composition-consistency.
	Let $F$ be the PSCF induced by $f$.
	By \Cref{lem:fractional-extension}, $F$ satisfies population-consistency and composition-consistency.
	By \Cref{lem:f-subseteq-ml} and \Cref{lem:ml-subseteq-f}, $F =\ml$.
	Let $A\in\fone(U)$ and $R\in\prof[A]$ with $\sum_{\succcurlyeq \in \mathcal L(A)} R(\succcurlyeq) = n$, and let $R^* = \nicefrac1n\, R \in\fprof$.
	Then,
	\[f(R) = F(R^*) = \ml(R^*) = \ml(R).\]
	Therefore $f = \ml$.
\end{proof}


\begin{thebibliography}{50}
\providecommand{\natexlab}[1]{#1}
\providecommand{\url}[1]{\texttt{#1}}
\expandafter\ifx\csname urlstyle\endcsname\relax
  \providecommand{\doi}[1]{doi: #1}\else
  \providecommand{\doi}{doi: \begingroup \urlstyle{rm}\Url}\fi

\bibitem[Berker et~al.(2025)Berker, Casacuberta, Robinson, Ong, Conitzer, and
  Elkind]{BCR+25a}
R.~E. Berker, S.~Casacuberta, I.~Robinson, C.~Ong, V.~Conitzer, and E.~Elkind.
\newblock From independence of clones to composition consistency: A hierarchy
  of barriers to strategic nomination.
\newblock In \emph{Proceedings of the 26th ACM Conference on Economics and
  Computation (ACM-EC)}, 2025.

\bibitem[Brandl and Brandt(2020)]{BrBr17a}
F.~Brandl and F.~Brandt.
\newblock Arrovian aggregation of convex preferences.
\newblock \emph{Econometrica}, 88\penalty0 (2):\penalty0 799--844, 2020.

\bibitem[Brandl and Brandt(2024)]{BrBr21a}
F.~Brandl and F.~Brandt.
\newblock A natural adaptive process for collective decision-making.
\newblock \emph{Theoretical Economics}, 19\penalty0 (2):\penalty0 667--703,
  2024.

\bibitem[Brandl and Peters(2019)]{BrPe17a}
F.~Brandl and D.~Peters.
\newblock An axiomatic characterization of the {B}orda mean rule.
\newblock \emph{Social Choice and Welfare}, 52\penalty0 (4):\penalty0 685--707,
  2019.

\bibitem[Brandl and Peters(2022)]{BrPe19a}
F.~Brandl and D.~Peters.
\newblock Approval voting under dichotomous preferences: {A} catalogue of
  characterizations.
\newblock \emph{Journal of Economic Theory}, 205, 2022.

\bibitem[Brandl et~al.(2016)Brandl, Brandt, and Seedig]{Bran13a}
F.~Brandl, F.~Brandt, and H.~G. Seedig.
\newblock Consistent probabilistic social choice.
\newblock \emph{Econometrica}, 84\penalty0 (5):\penalty0 1839--1880, 2016.

\bibitem[Brandl et~al.(2022)Brandl, Brandt, and Stricker]{BBS20a}
F.~Brandl, F.~Brandt, and C.~Stricker.
\newblock An analytical and experimental comparison of maximal lottery schemes.
\newblock \emph{Social Choice and Welfare}, 58\penalty0 (1):\penalty0 5--38,
  2022.

\bibitem[Brandt(2011)]{Bran11b}
F.~Brandt.
\newblock Minimal stable sets in tournaments.
\newblock \emph{Journal of Economic Theory}, 146\penalty0 (4):\penalty0
  1481--1499, 2011.

\bibitem[Brandt(2017)]{Bran17a}
F.~Brandt.
\newblock Rolling the dice: {R}ecent results in probabilistic social choice.
\newblock In U.~Endriss, editor, \emph{Trends in Computational Social Choice},
  chapter~1, pages 3--26. AI Access, 2017.

\bibitem[Brandt(2026)]{Bran22a}
F.~Brandt.
\newblock Stochastic choice and dynamics based on pairwise comparisons.
\newblock In M.~Voorneveld, J.~W. Weibull, T.~Andersson, R.~B. Myerson, J.-F.
  Laslier, R.~Laraki, and Y.~Koriyama, editors, \emph{One Hundred Years of Game
  Theory: {A} {N}obel Symposium}, Econometric Society Monographs, pages
  158--166. Cambridge University Press, 2026.

\bibitem[Brandt et~al.(2011)Brandt, Brill, and Seedig]{BBS11a}
F.~Brandt, M.~Brill, and H.~G. Seedig.
\newblock On the fixed-parameter tractability of composition-consistent
  tournament solutions.
\newblock In \emph{Proceedings of the 22nd International Joint Conference on
  Artificial Intelligence (IJCAI)}, pages 85--90, 2011.

\bibitem[Brandt et~al.(2025)Brandt, Dong, and Peters]{BDP24a}
F.~Brandt, C.~Dong, and D.~Peters.
\newblock Condorcet-consistent choice among three candidates.
\newblock \emph{Games and Economic Behavior}, 153:\penalty0 113--130, 2025.

\bibitem[Congar and Merlin(2012)]{CoMe12a}
R.~Congar and V.~R. Merlin.
\newblock A characterization of the maximin rule in the context of voting.
\newblock \emph{Theory and Decision}, 72\penalty0 (1):\penalty0 131--147, 2012.

\bibitem[Dasgupta and Maskin(2008)]{DaMa08a}
P.~Dasgupta and E.~Maskin.
\newblock On the robustness of majority rule.
\newblock \emph{Journal of the European Economic Association}, 6\penalty0
  (5):\penalty0 949--973, 2008.

\bibitem[Dutta and Laslier(1999)]{DuLa99a}
B.~Dutta and J.-F. Laslier.
\newblock Comparison functions and choice correspondences.
\newblock \emph{Social Choice and Welfare}, 16\penalty0 (4):\penalty0 513--532,
  1999.

\bibitem[Felsenthal and Machover(1992)]{FeMa92a}
D.~S. Felsenthal and M.~Machover.
\newblock After two centuries should {C}ondorcet's voting procedure be
  implemented?
\newblock \emph{Behavioral Science}, 37\penalty0 (4):\penalty0 250--274, 1992.

\bibitem[Fine and Fine(1974)]{FiFi74a}
B.~Fine and K.~Fine.
\newblock Social choice and individual ranking {I}.
\newblock \emph{Review of Economic Studies}, 41\penalty0 (127):\penalty0
  303--323, 1974.

\bibitem[Fishburn(1978)]{Fish78d}
P.~C. Fishburn.
\newblock Axioms for approval voting: {D}irect proof.
\newblock \emph{Journal of Economic Theory}, 19\penalty0 (1):\penalty0
  180--185, 1978.

\bibitem[Fishburn(1984)]{Fish84a}
P.~C. Fishburn.
\newblock Probabilistic social choice based on simple voting comparisons.
\newblock \emph{Review of Economic Studies}, 51\penalty0 (4):\penalty0
  683--692, 1984.

\bibitem[Fisher and Ryan(1995)]{FiRy95a}
D.~C. Fisher and J.~Ryan.
\newblock Tournament games and positive tournaments.
\newblock \emph{Journal of Graph Theory}, 19\penalty0 (2):\penalty0 217--236,
  1995.

\bibitem[Harsanyi(1973)]{Hars73b}
J.~C. Harsanyi.
\newblock Games with randomly disturbed payoffs: A new rationale for
  mixed-strategy equilibrium points.
\newblock \emph{International Journal of Game Theory}, 2\penalty0 (1):\penalty0
  1--23, 1973.

\bibitem[Holliday(2024)]{Holl24a}
W.~H. Holliday.
\newblock An impossibility theorem concerning positive involvement in voting.
\newblock \emph{Economics Letters}, 236:\penalty0 111589, 2024.

\bibitem[Holliday and Pacuit(2023)]{HoPa20a}
W.~H. Holliday and E.~Pacuit.
\newblock Split cycle: {A} new {C}ondorcet-consistent voting method independent
  of clones and immune to spoilers.
\newblock \emph{Public Choice}, 197\penalty0 (1--2):\penalty0 1--62, 2023.

\bibitem[Horan(2013)]{Hora13a}
S.~M. Horan.
\newblock Implementation of majority voting rules.
\newblock 2013.
\newblock Working paper.

\bibitem[Kreweras(1965)]{Krew65a}
G.~Kreweras.
\newblock Aggregation of preference orderings.
\newblock In \emph{Mathematics and Social Sciences {I}: Proceedings of the
  seminars of {Menthon-Saint-Bernard}, {F}rance (1--27 July 1960) and of
  G{\"o}sing, {A}ustria (3--27 July 1962)}, pages 73--79, 1965.

\bibitem[Lackner and Skowron(2021)]{LaSk21a}
M.~Lackner and P.~Skowron.
\newblock Consistent approval-based multi-winner rules.
\newblock \emph{Journal of Economic Theory}, 192:\penalty0 105173, 2021.

\bibitem[Laffond et~al.(1993)Laffond, Laslier, and {Le Breton}]{LLL93a}
G.~Laffond, J.-F. Laslier, and M.~{Le Breton}.
\newblock More on the tournament equilibrium set.
\newblock \emph{Math{\'e}matiques et sciences humaines}, 31\penalty0
  (123):\penalty0 37--44, 1993.

\bibitem[Laffond et~al.(1996)Laffond, Lain{\'e}, and Laslier]{LLL96a}
G.~Laffond, J.~Lain{\'e}, and J.-F. Laslier.
\newblock Composition-consistent tournament solutions and social choice
  functions.
\newblock \emph{Social Choice and Welfare}, 13\penalty0 (1):\penalty0 75--93,
  1996.

\bibitem[Laffond et~al.(1997)Laffond, Laslier, and {Le Breton}]{LLL97a}
G.~Laffond, J.-F. Laslier, and M.~{Le Breton}.
\newblock A theorem on symmetric two-player zero-sum games.
\newblock \emph{Journal of Economic Theory}, 72\penalty0 (2):\penalty0
  426--431, 1997.

\bibitem[Laslier(1996)]{Lasl96a}
J.-F. Laslier.
\newblock Rank-based choice correspondences.
\newblock \emph{Economics Letters}, 52\penalty0 (3):\penalty0 279--286, 1996.

\bibitem[Laslier(1997)]{Lasl97a}
J.-F. Laslier.
\newblock \emph{Tournament Solutions and Majority Voting}.
\newblock Springer-Verlag, 1997.

\bibitem[Laslier(2000)]{Lasl00a}
J.-F. Laslier.
\newblock Aggregation of preferences with a variable set of alternatives.
\newblock \emph{Social Choice and Welfare}, 17\penalty0 (2):\penalty0 269--282,
  2000.

\bibitem[Lederer(2024)]{Lede23a}
P.~Lederer.
\newblock Bivariate scoring rules: Unifying the characterizations of positional
  scoring rules and {K}emeny's rule.
\newblock \emph{Journal of Economic Theory}, 218:\penalty0 105836, 2024.

\bibitem[Myerson(1995)]{Myer95b}
R.~B. Myerson.
\newblock Axiomatic derivation of scoring rules without the ordering
  assumption.
\newblock \emph{Social Choice and Welfare}, 12\penalty0 (1):\penalty0 59--74,
  1995.

\bibitem[Nehring and Pivato(2018)]{NePi18a}
K.~Nehring and M.~Pivato.
\newblock The median rule in judgement aggregation.
\newblock 2018.
\newblock Working paper.

\bibitem[{\"O}zt{\"u}rk(2020)]{OzTu20a}
Z.~E. {\"O}zt{\"u}rk.
\newblock Consistency of scoring rules: {A} reinvestigation of
  composition-consistency.
\newblock \emph{International Journal of Game Theory}, 49:\penalty0 801--831,
  2020.

\bibitem[Pivato(2013)]{Piva13a}
M.~Pivato.
\newblock Variable-population voting rules.
\newblock \emph{Journal of Mathematical Economics}, 49\penalty0 (3):\penalty0
  210--221, 2013.

\bibitem[Rivest and Shen(2010)]{RiSh10a}
R.~L. Rivest and E.~Shen.
\newblock An optimal single-winner preferential voting system based on game
  theory.
\newblock In \emph{Proceedings of the 3rd International Workshop on
  Computational Social Choice (COMSOC)}, pages 399--410, 2010.

\bibitem[Saari(1990)]{Saar90a}
D.~G. Saari.
\newblock Consistency of decision processes.
\newblock \emph{Annals of Operations Research}, 23\penalty0 (1):\penalty0
  103--137, 1990.

\bibitem[Saari(1995)]{Saar95a}
D.~G. Saari.
\newblock \emph{Basic Geometry of Voting}.
\newblock Springer, 1995.

\bibitem[Skowron et~al.(2019)Skowron, Faliszewski, and Slinko]{SFS19a}
P.~Skowron, P.~Faliszewski, and A.~Slinko.
\newblock Axiomatic characterization of committee scoring rules.
\newblock \emph{Journal of Economic Theory}, 180:\penalty0 244--273, 2019.

\bibitem[Smith(1973)]{Smit73a}
J.~H. Smith.
\newblock Aggregation of preferences with variable electorate.
\newblock \emph{Econometrica}, 41\penalty0 (6):\penalty0 1027--1041, 1973.

\bibitem[Tideman(1987)]{Tide87a}
T.~N. Tideman.
\newblock Independence of clones as a criterion for voting rules.
\newblock \emph{Social Choice and Welfare}, 4\penalty0 (3):\penalty0 185--206,
  1987.

\bibitem[{von Neumann}(1928)]{vNeu28a}
J.~{von Neumann}.
\newblock Zur {T}heorie der {G}esellschaftspiele.
\newblock \emph{Mathematische Annalen}, 100\penalty0 (1):\penalty0 295--320,
  1928.

\bibitem[Young(1974{\natexlab{a}})]{Youn74a}
H.~P. Young.
\newblock An axiomatization of {B}orda's rule.
\newblock \emph{Journal of Economic Theory}, 9\penalty0 (1):\penalty0 43--52,
  1974{\natexlab{a}}.

\bibitem[Young(1974{\natexlab{b}})]{Youn74b}
H.~P. Young.
\newblock A note on preference aggregation.
\newblock \emph{Econometrica}, 42\penalty0 (6):\penalty0 1129--1131,
  1974{\natexlab{b}}.

\bibitem[Young(1975)]{Youn75a}
H.~P. Young.
\newblock Social choice scoring functions.
\newblock \emph{SIAM Journal on Applied Mathematics}, 28\penalty0 (4):\penalty0
  824--838, 1975.

\bibitem[Young(1977)]{Youn77a}
H.~P. Young.
\newblock Extending {C}ondorcet's rule.
\newblock \emph{Journal of Economic Theory}, 16:\penalty0 335--353, 1977.

\bibitem[Young and Levenglick(1978)]{YoLe78a}
H.~P. Young and A.~B. Levenglick.
\newblock A consistent extension of {C}ondorcet's election principle.
\newblock \emph{SIAM Journal on Applied Mathematics}, 35\penalty0 (2):\penalty0
  285--300, 1978.

\bibitem[Zavist and Tideman(1989)]{ZaTi89a}
T.~M. Zavist and T.~N. Tideman.
\newblock Complete independence of clones in the ranked pairs rule.
\newblock \emph{Social Choice and Welfare}, 6\penalty0 (2):\penalty0 167--173,
  1989.

\end{thebibliography}
\end{document}